\newcommand{\R}{\mathbb{R}}
\newcommand{\Rmnum}[1]{\expandafter\@slowromancap\romannumeral #1@}
\newtheorem{theorem}{Theorem}[section]
\newtheorem{lemma}[theorem]{Lemma}
\newtheorem{definition}[theorem]{Definition}
\newtheorem{example}[theorem]{Example}
\newtheorem{assumption}[theorem]{Assumption}
\numberwithin{equation}{section}
\title{Relative Maximum Likelihood Updating of Ambiguous Beliefs\footnote{I am grateful to Peter Klibanoff and Marciano Siniscalchi for invaluable guidance and discussions throughout completion of this paper. I thank Eddie Dekel, Itzhak Gilboa, Eran Hanany, Shaowei Ke, Yucheng Liang, Pietro Ortoleva, Rui Tang, Jingyi Xue, Chen Zhao, Mu Zhang, especially Modibo Camara and Lorenzo Stanca for insightful discussions. I thank all participants at ESWC 2020 and RUD 2020 for comments. All errors are my own.}}
\author{Xiaoyu Cheng\footnote{
		Department of Managerial Economics and Decision Sciences, Kellogg School of Management, Northwestern University, Evanston, IL, USA. E-mail: xiaoyu.cheng@kellogg.northwestern.edu}}
\begin{document}
	\maketitle
	
	\begin{abstract}
		This paper proposes and axiomatizes a new updating rule: Relative Maximum Likelihood (RML) updating for ambiguous beliefs represented by a set of priors ($C$). This rule takes the form of applying Bayes' rule to a subset of $C$. This subset is a linear contraction of $C$ towards its subset ascribing the maximal probability to the observed event. The degree of contraction captures the extent of willingness to discard priors based on likelihood when updating. Two well-known updating rules of multiple priors, full Bayesian (FB) and Maximum Likelihood (ML), are included as special cases of RML. An axiomatic characterization of conditional preferences generated by RML updating is provided when the preferences admit Maxmin Expected Utility representations. The axiomatization relies on weakening the axioms characterizing FB and ML. The axiom characterizing ML is identified for the first time in this paper, addressing a long-standing open question in the literature. \\
		
		\textit{JEL: D81, D83}
		
		\textit{Keywords: ambiguity, updating, maximum likelihood, full Bayesian, contingent reasoning, dynamic consistency}  
	\end{abstract}

	\newpage
	\section{Introduction}\label{intro}
	For decisions under uncertainty, when information is not sufficient to pin down a unique distribution over states, the decision-maker (DM)'s revealed preference sometimes is not consistent with any single probabilistic belief. Yet it could be consistent with a set of priors \citep*{Ellsberg1961,Machina1992}.  When the DM learns additional information, updating of the set of priors may involve two steps. First, she could use this information to make inference about the plausibility of each prior and discard those deemed to be implausible. Second, she applies Bayes' rule to update every prior in this refined set conditional on the information received. 
	
	The two most popular updating rules for multiple priors, \textbf{full Bayesian (FB)}\footnote{Some may also refer to it as prior-by-prior updating.} and \textbf{Maximum Likelihood (ML)}, are both examples of such an updating procedure.  FB takes the form of applying Bayes' rule to the entire set of priors. In other words, the DM does not discard any prior under FB updating. In contrast, under ML updating, the DM discards priors that do not ascribe the maximal probability (among all priors in the set) to the observed event and updates the remaining priors according to Bayes' rule. Therefore, FB and ML can be regarded as polar extremes in terms of discarding priors based on likelihood of the observed event in updating. 
	
	However, sometimes a DM may not be willing to be as extreme as either case. For example, she may find the priors assigning a very small probability to the observed event to be implausible, while at the same time would like to keep the priors ascribing an almost maximal probability. Such an intermediate behavior cannot be captured by either of the two extremes. Moreover, the preference behaviors corresponding to such non-extreme updating rules have not been extensively studied and identified in the literature.\footnote{See Section \ref{discussion} for more detailed references.}
	
	In this paper, I propose and characterize a new updating rule, \textbf{Relative Maximum Likelihood (RML)} updating, which takes the form of applying Bayes' rule to a subset of the set of priors. The specific subset is determined by a convex combination of the set of priors and its subset which ascribes a maximal probability to the observed event.\footnote{The convex combination of two sets is defined as the set of pointwise convex combinations. } It can also be understood as a linear contraction of the set of priors towards its maximum-likelihood subset. Consequently, RML is able to provide a means of intermediate updating between FB and ML. Moreover, its specific form suggests that under RML not only likelihood but also \textit{shape} of the initial set of priors are crucial for determining the set of posteriors.
	
	Formally, let a closed and convex set $C$ denote the set of priors over a state space $\Omega$. For some conditioning event $E \subseteq \Omega$, let $C^{*}(E)$ denote the subset of $C$ attaining maximum likelihood of the event $E$, i.e. $C^{*}(E) \equiv \arg\max_{p \in C}p(E)$.\footnote{These notations will be used throughout this paper.}
	
	\begin{definition}\label{rmldf}
		Under RML updating, when the event $E$ occurs, each of the priors in the set $C_{\alpha}(E)$ will be updated according to Bayes' rule. For some $\alpha \in [0,1]$, $C_{\alpha}(E)$ is defined as: 
		\begin{equation*}
			C_{\alpha}(E) \equiv  \alpha C^{*}(E) + (1-\alpha)C  =  \{\alpha p + (1-\alpha)q: p\in C^{*}(E) \text{ and } q  \in C  \}.
		\end{equation*}
	\end{definition}
	
	RML captures the intermediate updating behaviors in the sense that for all $\alpha \in [0,1]$ one has $C^{*}(E) \subseteq C_{\alpha}(E) \subseteq C$. Furthermore, FB and ML are included as extreme special cases.  
	
	The updating rule, RML, is defined under a single event $E$. For conditional preferences, a DM may apply RML updating with different parameters $\alpha$ for different events. Such a phenomenon can be seen in an experiment by \cite{liang2019}.\footnote{See Section \ref{signal} for more details.} To allow for such a possibility, I consider two different representations of conditional preferences that differ in the extent to which the parameter $\alpha$ is allowed to vary across events. On the one hand, the conditional preferences are represented by \textbf{Contingent RML} if the DM applies RML updating with parameter $\alpha[E]$ that potentially depends on the event $E$. On the other other hand, the conditional preferences generated by applying RML with a constant parameter $\alpha$ are said to be represented by \textbf{RML}. 
	
	I provide foundations for conditional preferences represented by Contingent RML and RML when the preferences admit Maxmin Expected Utility (MEU) representations \citep*{GILBOA1989141}. Under MEU, the DM evaluates prospects according to the worst expected utility generated by the set of priors. 
	
	For MEU preferences, \cite{pires2002rule} characterizes FB by a simple behavioral axiom. \cite{GILBOA199333} axiomatizes ML when preferences admit both MEU and Choquet Expected Utility (CEU) representations, which is a strict special case of the MEU preferences. Their axiomatization, however, does not extend to the more general case. In fact, there has been no axiomatization of ML under general MEU preferences in the literature. The first main result of this paper (Theorem \ref{thm1} and more generally Theorem \ref{thm5}) addresses this long-standing open question by providing a characterization of ML for MEU preferences. Moreover, this result also proves to be instrumental for the characterization of Contingent RML and RML. 
	
	Both representations are characterized by weakening the axioms leading to FB and ML. In turn, those axioms are relaxations of the well-known \textit{dynamic consistency} principle.\footnote{See \cite{ghirardato2002revisiting} for an example.} More specifically, Theorem \ref{thm2} characterizes the Contingent RML by two behavioral axioms. In addition, Theorem \ref{thm4} shows that adding one axiom to the characterization of contingent RML is necessary and sufficient to pin down a constant $\alpha$ across events. 
	
	These characterization results not only identify the behavioral foundations for the type of non-extreme updating specified by RML updating, but also suggest the common behavioral patterns under FB and ML, two seemingly orthogonal updating rules. \\
	
	Motivated by these characterizations, in this paper I further identify a key issue in the applications of ambiguity and showcase how RML can be applied to address it.
	
	Many recent applications of the MEU model assume the players update according to FB.\footnote{For example, \cite{Bose2014}, \cite{KELLNER20181} and \citet*{BEAUCHENE2019312}} In settings such as mechanism design or information design, these applications find that the introduction of ambiguity is strictly beneficial for the principal. However, there is clearly a caveat that this finding may hinge on the specific assumption of FB updating. RML, as a larger family of updating rules including FB, provides a useful tool for examining this issue. 
	
	As an illustration, Section \ref{persuasion} analyzes an example in the context of ambiguous persuasion studied by \citet*{BEAUCHENE2019312}. In this example, they construct an ambiguous persuasion scheme granting the sender strictly more payoff than the optimal Bayesian persuasion under the assumption that the receiver uses FB updating. I first show that this scheme becomes strictly worse than the optimal Bayesian persuasion whenever the receiver slightly deviates from FB in the direction of ML. The level of such a deviation can be conveniently captured by the parameter $\alpha$ under RML. This finding shows that the particular construction used by \citet*{BEAUCHENE2019312} can be non-robust with respect to updating rules. 
	
	Nonetheless, I construct an alternative ambiguous persuasion scheme. It further requires that all the probabilistic devices in an ambiguous device have the same overall probability of sending the same signal. As a result, the receiver would behave exactly the same no matter what attitude he has towards discarding priors based on likelihood. I call such a scheme \textit{uniform-likelihood ambiguous persuasion}. I further show that the sender can do strictly better using uniform-likelihood ambiguous persuasion compared to Bayesian persuasion in this example. In other words, the strict gain from using ambiguous strategies does not rely on the specific assumption of FB updating in this case.\\
	
	The remainder of this paper is organized as follows. Section \ref{pre} sets up the environment and provides a characterization of ML updating. Section \ref{rml}, the main section, gives the formal definition of preferences represented by Contingent RML and RML and also provides the foundations for them. Section \ref{signal} looks at a special environment with ambiguous signals and illustrates the predictions and interpretations that Contingent RML and RML are able to offer. Section \ref{persuasion} applies RML to the example of ambiguous persuasion and addresses the issue of robustness to updating. Section \ref{discussion} talks about the related literature. Section \ref{conclude} concludes. All the proofs are collected in Appendix \ref{ap_pf}. 
	
	\section{Preliminaries}\label{pre}
	\subsection{Set up}
	Let $\Omega$ be the set of \textit{states} with at least three elements, endowed with a sigma-algebra $\Sigma$ of \textit{events}. Denote a generic event by $E$. Let $\Delta(\Omega)$ denote the set of all finitely additive probability measures on $(\Omega, \Sigma)$ endowed with the weak* topology. Let $X$ be the set of all \textit{consequences} that are simple (i.e. finite-support) lotteries over a set of prizes $Z$ and let $x$ denote a generic element of $X$. Let $\mathcal{F}$ denote the set of \textit{simple acts}, meaning that each $f \in \mathcal{F}$ is a finite-valued $\Sigma$-measurable function from $\Omega$ to $X$. With conventional abuse of notation, denote a constant act which maps all states $\omega \in \Omega$ to $x$ simply by $x$. 
	
	The primitive is a family of preferences $\{\succsim_{E}\}_{E\in \Sigma}$ over all the acts. Let $\succsim_{\Omega} \equiv \succsim$ denote the ex-ante preference. For all the other non-empty $E\in \Sigma$, let $\succsim_{E}$ denote the conditional preference conditioning on event $E$ occurs. 
	
	First, assume that the ex-ante preference $\succsim$ admits a Maxmin Expected Utility (MEU) representation, i.e. it is represented by a closed and convex set $C \subseteq \Delta(\Omega)$ and an affine utility function $u$ such that for all $f,g \in \mathcal{F}$:  
	\begin{equation*}
		f \succsim  g  \Leftrightarrow \min\limits_{p \in C} \int_{\Omega} u(f)dp \geq  \min\limits_{p \in C}  \int_{\Omega} u(g) dp 
	\end{equation*}
	where $u(f)$ denotes a random variable $Y : \Omega \rightarrow \R$ such that $Y(\omega) = u(f(\omega))$ for all $\omega \in \Omega$. Denote the ex-ante evaluation of an act $f$ according to this MEU representation by $U(f)$. 
	
	For the ease of exposition, I impose the following assumption on the ex-ante preference: 
	
	\begin{assumption} \label{asum}(i) The set $u(X) \equiv \{ u(x) : x \in X  \}$ is unbounded from above. (ii) $C$ has finitely many extreme points.
	\end{assumption}
	
	Assumption \ref{asum} proves to be extremely useful as it helps simplify the statements of the axioms to a great extent while conveying the same intuitions. All characterizations can be achieved without these assumptions by slightly modifying the axioms.\footnote{See Appendix \ref{apx1}.} Nevertheless, these assumptions are also arguably reasonable. For example, the unbounded assumption applies to monetary payoffs in general. The finitely many extreme points assumption covers a large number of common and popular cases of multiple priors, for example the $\epsilon$-contamination. \\

	Given Assumption \ref{asum}, it is without loss of generality to normalize $u(\cdot)$ such that $u(X) = (\underline{u}, \infty)$ for some $\underline{u} \in \R_{<0}\cup\{-\infty\}$. 
	
	For each $E\in \Sigma$,  for any $f,h \in \mathcal{F}$, let $f_{E}h$ denote an act mapping all $\omega \in E$ to $f(\omega)$ and all $\omega \in E^{c}$ to $h(\omega)$. An event $E$ is strict $\succsim$-nonnull if for all $x,x' \in X$ such that $x \succ x'$, one has $x_{E}x' \succ x'$. Under MEU, an event $E$ is strict $\succsim$-nonnull if and only if $p(E) > 0$ for all $p \in C$. 
	
	For each strict $\succsim$-nonnull $E\in \Sigma$, assume that the conditional preference $\succsim_{E}$ also admits a MEU representation. It is represented by a set $C_{E} \subseteq \Delta(\Omega)$ and the same utility function $u$ such that for all $f,g \in \mathcal{F}$:  
	\begin{equation*}
		f \succsim_{E}  g  \Leftrightarrow \min\limits_{p \in C_{E}} \int_{\Omega} u(f)dp\geq  \min\limits_{p \in C_{E}}\int_{\Omega} u(g)dp
	\end{equation*}
	Meanwhile, the conditional preference $\succsim_{E}$ for all other $E$ is unrestricted.\footnote{Considering only strict $\succsim$-nonnull events is to avoid conditioning on zero-probability event under FB updating.} 
	
	Finally, assume that the conditional preferences satisfy \textit{consequentialism}: first, for all $p \in C_{E}$, $p(E) = 1$, i.e. the complement of the conditioning event is irrelevant for the conditional preference; and second, ex-ante preference and conditioning event $E$ completely determine the conditional preference $\succsim_{E}$, which rules out the possibility that the updating rule may depend on the context of decision problems (e.g. the menu of available acts). 
	
	All properties assumed for the primitive have axiomatized foundations: 
	\begin{itemize}
		\item Maxmin Expected Utility representation: axioms from \cite{GILBOA1989141}. 
		\item $u(X)$ is unbounded from above: unboundedness axiom from \citet*{Maccheroni2006}.
		\item $C$ has finitely many extreme points: no local hedging axiom from \cite{SINISCALCHI20061}. 
		\item $u$ is independent of $E$: state independence axiom from \cite{pires2002rule} or unchanged tastes axiom from \cite{hanany2007updating}. 
		\item Consequentialism: null complement axiom and the independence axioms in section 3.2 of \cite{hanany2007updating}. 
	\end{itemize}
	
	Setting up in this way allows me to establish a clear equivalence between the key axioms relating the conditional preferences to the ex-ante preference and the updating rules. Hence the key behavioral foundations characterizing different updating rules under MEU preferences can be highlighted in the representation theorems. 
	
	\subsection{Two Basic Axioms}
	For dynamic choice, the well-known dynamic consistency principle relates the conditional preferences to the ex-ante preference in two different directions (see \cite{ghirardato2002revisiting} for an example). To further distinguish between these two directions, they will be referred to separately as two basic axioms: \textbf{Contingent Reasoning (CR)} and \textbf{Dynamic Consistency (DC)} in this paper:\footnote{The statements of the axioms are equivalent to the statements of dynamic consistency adopted by \cite{ghirardato2002revisiting}.}\\
	
	\textbf{Axiom CR} (\textbf{C}ontingent \textbf{R}easoning).
	
	For all $f,h \in \mathcal{F}$ and $x\in X$, if $f\sim_{E}x$ then $f_{E}h \sim x_{E}h$.\\
	
	\textbf{Axiom DC} (\textbf{D}ynamic \textbf{C}onsistency).
	
	For all $f,h \in \mathcal{F}$ and $x\in X$, if $f_{E}h \sim x_{E}h$ then  $f\sim_{E}x$.\\
	
	Intuitively, CR emphasizes that the ex-ante preference could be recovered by considering conditional preferences contingent on whether or not the event $E$ occurs. Namely, the ex-ante comparison between the acts $f_{E}h$ and $x_{E}h$ could be done by the following procedure: Contingent on event $E$ occurring, the DM is conditionally indifferent. Meanwhile, contingent on $E$ not occurring, she receives exactly the same thing. Thus, she further concludes that she is indifferent ex-ante. 
	
	DC emphasizes the other direction, that the conditional preference should be consistent with the ex-ante preference when it does not matter what pays on the event $E^{c}$. In other words, if  two acts differ only within the event $E$, then the DM's ex-ante comparison should remain the same after learning event $E$ occurs. 
	
	As MEU preferences violate Savage's P2 (sure thing principle), whether or not the statement $f_{E}h \sim x_{E}h$ is true would also depend on the act $h$. As a result, it is no longer meaningful to state CR and DC in terms of all the acts, as a different $h$ may result in a different conclusion. Instead, the claims about CR and DC should also specify under which act it holds. 
	
	For example, a consequence $x \in X $ is said to be the \textbf{conditional certainty equivalent} of an act $f$ given event $E$ if $x \sim_{E}f$. Consider the following statement of CR, in which the act $h$ is explicitly fixed to be the conditional certainty equivalent: \\
	
	\textbf{Axiom CR-C} (\textbf{C}ontingent \textbf{R}easoning given \textbf{C}onditional certainty equivalent).
	
	For all $f \in \mathcal{F}$ and $x\in X$, if $f \sim_{E}x$, then $f_{E}x \sim x$. \\
	
	CR-C is exactly the A9 axiom in \cite{pires2002rule} and FB updating is defined in \cite{pires2002rule} by the following:
	
	\begin{definition}
		The primitive $\{ \succsim_{E}\}_{E\in \Sigma}$ is represented by \textbf{full Bayesian updating} if the following holds for all strict $\succsim$-nonnull $E \in \Sigma$ and for all $f\in \mathcal{F}$: 
		\begin{equation*}
			\min\limits_{p \in C_{E}} \int_{\Omega} u(f)dp = \min\limits_{p \in C} \int_{E} u(f)\frac{dp}{p(E)}.
		\end{equation*}
	\end{definition}
	
	Under the current framework, the representation theorem in \cite{pires2002rule} can be simplified to the following:  
	
	\begin{theorem}[Pires, 2002]
		$\{ \succsim_{E}\}_{E\in \Sigma}$ is represented by full Bayesian updating if and only if CR-C holds for all strict $\succsim$-nonnull events $E\in \Sigma$. 
	\end{theorem}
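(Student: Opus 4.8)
The plan is to prove both implications directly from the MEU representations of $\succsim$ and each $\succsim_E$, with the real work being a single algebraic decomposition of the ex-ante value of the composite act $f_E x$. Throughout, I would write $U(g) = \min_{p\in C}\int_\Omega u(g)\,dp$ for the ex-ante value, $V_E(f) = \min_{p\in C_E}\int_\Omega u(f)\,dp$ for the conditional value representing $\succsim_E$, and $W_E(f) = \min_{p\in C}\int_E u(f)\,\tfrac{dp}{p(E)}$ for the candidate full-Bayesian value; since full Bayesian updating is defined as the identity of value functions, the goal is precisely to show $V_E = W_E$ on all acts. Because $E$ is strict $\succsim$-nonnull and $p\mapsto p(E)$ is continuous and strictly positive on the compact set $C$ (Assumption \ref{asum}(ii) makes $C$ the convex hull of finitely many priors), I would first record that $\delta \equiv \min_{p\in C} p(E) > 0$. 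This uniform bound does the heavy lifting at the end.

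The central computation is the following rewriting: for any $f\in\mathcal{F}$ and any constant act $x$, using $\int_{E^c} u(x)\,dp = u(x)\bigl(1-p(E)\bigr)$ and factoring out $p(E)$,
\[
U(f_E x) \;=\; \min_{p\in C}\Bigl[\int_E u(f)\,dp + u(x)\bigl(1-p(E)\bigr)\Bigr] \;=\; u(x) + \min_{p\in C} p(E)\Bigl[\int_E u(f)\,\tfrac{dp}{p(E)} - u(x)\Bigr].
\]
Everything then reduces to the sign of the bracketed term $g(p) \equiv \int_E u(f)\,\tfrac{dp}{p(E)} - u(x)$.

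For necessity (FB $\Rightarrow$ CR-C), suppose $f\sim_E x$, so $u(x) = V_E(f) = W_E(f)$ by full Bayesian updating; hence $\min_p g(p) = 0$, i.e. $g(p)\ge 0$ for all $p$ with equality at some $p^{*}$. Since every $p(E)>0$, the products $p(E)\,g(p)$ are all nonnegative and vanish at $p^{*}$, so the displayed minimum is $0$ and $U(f_E x) = u(x) = U(x)$, giving $f_E x \sim x$. For sufficiency (CR-C $\Rightarrow$ FB), fix $f$ and let $x$ be a conditional certainty equivalent; this exists because $V_E(f)\in u(X)=(\underline{u},\infty)$, its value being at least $\min_\omega u(f(\omega)) > \underline{u}$. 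Then $u(x)=V_E(f)$ and CR-C yields $f_E x\sim x$, i.e. $U(f_E x)=u(x)$, which by the display forces $\min_{p\in C} p(E)\,g(p)=0$.

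The main obstacle is converting this last equation into the value identity $W_E(f)=u(x)$ (equivalently $\min_p g(p)=0$), since a priori the minimizer of $p(E)\,g(p)$ need not minimize $g(p)$. This is exactly where the uniform bound $p(E)\ge\delta>0$ closes the gap: first, if $g(p_0)<0$ for some $p_0$ then $p_0(E)\,g(p_0)<0$, contradicting that the minimum is $0$, so $g\ge 0$ everywhere; second, writing $m \equiv \min_p g(p)\ge 0$, the chain $p(E)\,g(p)\ge \delta\, g(p)\ge \delta m$ gives $0 = \min_p p(E)\,g(p)\ge \delta m$, whence $m=0$. Therefore $W_E(f)=u(x)=V_E(f)$, and since $f$ was arbitrary this establishes full Bayesian updating.
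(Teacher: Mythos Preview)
Your argument is correct. Note, however, that the paper does not supply its own proof of this statement: it is quoted as Pires' (2002) result, so there is nothing in the paper to compare your write-up against directly. That said, your decomposition
\[
U(f_{E}x) \;=\; u(x) + \min_{p\in C} p(E)\Bigl[\int_{E} u(f)\,\tfrac{dp}{p(E)} - u(x)\Bigr]
\]
is exactly the device the paper itself uses later (Step~4 in the proof of Theorem~\ref{thm2}) to pass between the ex-ante value of $f_{E}x$ and the conditional value of $f$; in that sense your approach is fully in the spirit of the paper's methods. Two minor remarks: (i) you invoke Assumption~\ref{asum}(ii) to get compactness of $C$ and hence $\delta=\min_{p\in C}p(E)>0$, but compactness already follows from the Gilboa--Schmeidler representation, so the finite-extreme-points assumption is not actually needed here; (ii) your two-step argument for ``$\min_{p} p(E)g(p)=0 \Rightarrow \min_{p} g(p)=0$'' can be shortened: once you have $g\ge 0$, the minimizer $p^{\star}$ of $p(E)g(p)$ satisfies $p^{\star}(E)g(p^{\star})=0$ with $p^{\star}(E)>0$, so $g(p^{\star})=0$ directly.
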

	
	This theorem highlights that CR-C is the key behavioral foundation for FB updating under MEU preferences. 
	
	\subsection{Maximum Likelihood Updating}\label{ml}
	Another popular updating rule for MEU preferences is the Maximum Likelihood (ML) updating rule proposed in \cite{GILBOA199333}. The conditional preferences represented by ML updating are defined as in the following: 
	
	\begin{definition}
		The primitive $\{\succsim_{E}\}_{E\in \Sigma}$ is represented by \textbf{maximum likelihood updating} if the following holds for all strict $\succsim$-nonnull $E \in \Sigma$ and for all $f\in \mathcal{F}$: 
		\begin{equation*}
			\min\limits_{p \in C_{E}} \int_{\Omega} u(f)dp = \min\limits_{p \in C^{*}(E)} \int_{E} u(f)\frac{dp}{p(E)}
		\end{equation*}
		where $C^{*}(E) = \arg\max_{p \in C} p(E)$.
	\end{definition}
	
	\cite{GILBOA199333} provide an axiomatic characterization of ML updating when the preferences admit both MEU and Choquet expected utility (CEU) representations. This class of preferences is a strict subset of preferences admitting MEU representations. 
	
	In that case, it is without loss of generality to let $u(X)$ be bounded and denote the best consequence in $X$ by $x^{*}$. \cite{GILBOA199333} show that ML is characterized by the following axiom:\\
	
	\textbf{Axiom CR-B} (\textbf{C}ontingent \textbf{R}easoning given \textbf{B}est consequence).
	
	For all $f \in \mathcal{F}$ and $x\in X$, if $f \sim_{E}x$, then $f_{E}x^{*} \sim x_{E}x^{*}$. \\
	
	CR-B claims that the DM's ex-ante evaluation of the act $f_{E}x^{*}$ can be recovered by applying contingent reasoning with respect to the event $E$. \cite{GILBOA199333} offer a ``pessimistic'' interpretation of this behavior: the DM's conditional preference of an act $f$ comes from the consideration that the best consequence would have been received had the complement event happened. In other words, her conditional evaluation of the act $f$ reflects a disappointment that event $E$ actually occurs. Under MEU preferences, the best consequence is ex-ante believed to be received only with the minimum probability. Consequently, the disappointment reflected in the conditional preference is in accordance with such an ex-ante belief. 
	
	As one enlarges the domain of preferences to the more general case of MEU, although this intuition may still seem to be applicable, the exact intuition captured by CR-B becomes imprecise. The following example illustrates a case where the ex-ante preference admits MEU but not CEU representation, and CR-B is violated under ML updating: 
	\begin{example}\label{exp3}
		Consider a state space $\Omega = \{\omega_{1}, \omega_{2}, \omega_{3}\}$, denote any $p \in \Delta(\Omega)$ by a vector with three coordinates $p = (p_{1}, p_{2}, p_{3})$ such that $p_{1} + p_{2} + p_{3} = 1$. Let set $C \subseteq \Delta(\Omega)$ be the closed convex hull of the following three points: $(1/2, 0, 1/2), (0, 1/2, 1/2)$ and $(1/3, 1/3, 1/3)$. 
		
		Let $X = [0,1]$ with utility function $u(x) = x$. Consider an act $f$ that pays one at $\omega_{1}$, zero at $\omega_{2}$ and is undetermined on state $\omega_{3}$. Let the conditioning event $E$ be the following set:  $\{\omega_{1}, \omega_{2}\}$. 
	\end{example}
	
	First, one can verify that a MEU preference represented by this $C$ does not admit a CEU representation.\footnote{This $C$ is not a core of any convex capacity.} If the conditional preference is given by ML updating, as $C^{*}(E)$ contains only the extreme point $(1/3,1/3,1/3)$, it implies that
	\begin{equation*}
		f \sim_{E} 1/2,
	\end{equation*}
	CR-B claims that $f_{E}x^{*}$ should be ex-ante indifferent to $1/2 _{E} x^{*}$ for $x^{*} = 1$. However, it is actually the case that 
	\begin{equation*}
		f_{E}1 \prec 1/2_{E}1,
	\end{equation*}
	i.e. CR-B is false. 
	
	One might notice that the breakdown of CR-B results from the fact that $f_{E}1$ and $1/2_{E}1$ are evaluated at different extreme points under MEU. In particular, the act $f_{E}1$ is not evaluated at the extreme point $(1/3, 1/3, 1/3)$. Thus, the DM's ex-ante evaluation of the act $f_{E}1$ is not given by the prior attaining the maximum likelihood of the event $E$. Moreover, the graphical illustration given in Figure \ref{fig:2} suggests that it is exactly because the consequence $x^{*} = 1$ is not good enough. 
	
	\begin{figure}[h]
		\centering
		\begin{tikzpicture}[x = 1.2cm, y = 1.2cm]
			\draw[black] (0,0) -- (5,0) -- (2.5, 4.33) --cycle;
			\node[right] at (5,0) {\small$\omega_{2}$}; 
			\node[right] at (2.5, 4.33) {\small$\omega_{3}$};
			\node[left] at (0,0) {\small$\omega_{1}$};
			\filldraw[blue!20]  (1.25, 2.165) -- (3.75, 2.165) -- (2.5,1.443) --cycle;
			\draw[black] (1.25, 2.165) -- (3.75, 2.165) -- (2.5,1.443) --cycle;
			\node[left] at (1.25, 2.165) {\small$(1/2,0,1/2)$};
			\node[right,xshift = 4pt] at (3.75, 2.165) {\small$(0,1/2,1/2)$};
			\node[below] at (2.5,1.2)  {\small$(1/3,1/3,1/3)$};
			\draw (2.31, -0.33) -- (5,4.33);
			\node[right] at (4.9,4.1568) {$f_{\omega_{1}, \omega_{2}}1$};
			\draw (1, 0.5776) -- (5.5, 3.1768);
			\node [right,xshift = 2pt] at (5.3,3.06128) {$f_{\omega_{1}, \omega_{2}}2$};
			\draw [blue, ->] (4.7,3.7) -- (5.2,3.1);
		\end{tikzpicture}
		\caption{Graphical Illustration of Example \ref{exp3}}
		\label{fig:2}
	\end{figure}
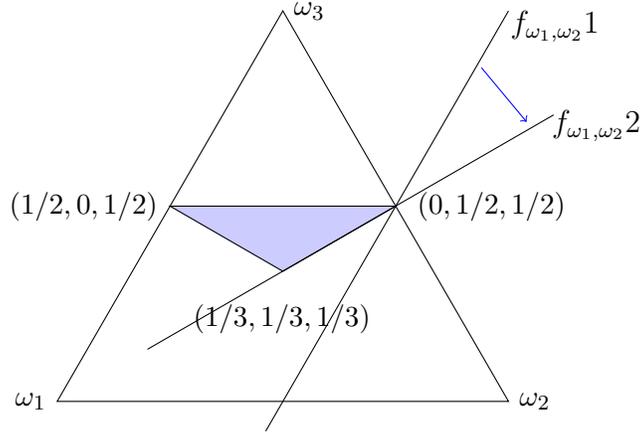
	
	In Figure \ref{fig:2}, the arrow indicates how the indifference curve of act $f_{\{\omega_{1}, \omega_{2}\}}x$ under expected utility would be changing (in angle) if one increases the value of consequence $x$. When $x = 1$, the act $f_{\{\omega_{1}, \omega_{2}\}}1$ is evaluated at the extreme point $(0, 1/2, 1/2)$ according to MEU. When $x = 2$, the act $f_{\{\omega_{1}, \omega_{2}\}}2$ is evaluated at the segment between the two extreme points $(1/3, 1/3, 1/3)$ and $(0, 1/2, 1/2)$. If one keeps increasing $x$, apparently for all $x \geq 2$, the act $f_{\{\omega_{1}, \omega_{2}\}}x$ will be evaluated at the extreme point $(1/3, 1/3, 1/3)$. 
	
	In other words, for this act $f$ and conditioning event $E $, the consequence $x$ paid on the event $E^{c}$ needs to be at least better than $x = 2$ for the ex-ante evaluation of it to be given by the prior $(1/3,1/3,1/3)$. Moreover, if $f$ instead pays two at $\omega_{1}$, then the threshold becomes $x = 4$. CR-B relies on an intuition that all the acts should have the same threshold. This example shows that such an intuition no longer holds under the general MEU preferences. 
	
	More specifically, under MEU, the ``best consequence'' in fact depends on acts and events. The following definition gives an exact characterization of the threshold for consequences that are \textit{sufficiently good}, instead of best for an act $f$ under event $E$. 
	
	\begin{definition}
		Fix any strict $\succsim$-nonnull event $E \in \Sigma$ and any act $f\in \mathcal{F}$. A consequence $\bar{x}_{E,f}$ is said to be the \textbf{threshold for sufficiently good consequences} if	for all $x^{*} \succsim \bar{x}_{E,f}$ and $x \precsim \bar{x}_{E,f}$, the following holds for all $\lambda \in [0,1]$ and constant acts $y$,$z$ with $y \prec z$: 
		\begin{equation*}
			f_{E}x^{*} \sim x_{E}x^{*} \Leftrightarrow \lambda f_{E}x^{*} + (1-\lambda)y_{E}z \sim \lambda x_{E}x^{*} +(1-\lambda) y_{E}z
		\end{equation*}
	\end{definition}
	
	To understand this definition, first notice that because $x \precsim x^{*}$ and $y \prec z$, evaluating the acts $x_{E}x^{*}$ and $y_{E}z$ under MEU criterion is given by assigning a maximal probability to the event $E$. This behavior exactly reflects the type of pessimism captured by the axiom CR-B. Now if evaluating the act $f_{E}x^{*}$ also reflects such a pessimism, i.e. by assigning a maximal probability to the event $E$, then for acts given by convex combinations of $f_{E}x^{*}$ and $y_{E}z$, they will also be evaluated under the same type of pessimism. The same also applies to convex combinations of $x_{E}x^{*}$ and $y_{E}z$, thus the independence relation. Therefore, this definition characterizes a threshold for the consequences to trigger such a pessimism for act $f$ under event $E$. 

	The existence of such a threshold for every act and event is guaranteed by Assumption \ref{asum}. In addition, any consequence $x^{*} \succsim \bar{x}_{E,f}$ is also by definition a threshold for sufficiently good consequences for the same event and act. Furthermore, the following axiom and representation theorem show that this is exactly the generalization needed to characterize ML updating under the general MEU preferences. \\
	
	\textbf{Axiom CR-S} (\textbf{C}ontingent \textbf{R}easoning given \textbf{S}ufficiently good consequence).
	
	For all $f \in \mathcal{F}$ and $x \in X$, if $f \sim_{E}x$, then $f_{E}\bar{x}_{E,f} \sim x_{E}\bar{x}_{E,f}$.  
	
	\begin{theorem}\label{thm1}
		$\{\succsim_{E}\}_{E\in \Sigma}$ is represented by maximum likelihood updating if and only if CR-S holds for all strict $\succsim$-nonnull events $E\in \Sigma$.  
	\end{theorem}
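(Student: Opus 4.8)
The plan is to reduce both directions to a single structural fact: when the consequence on $E^{c}$ is sufficiently good, the ex-ante MEU evaluation of $f_{E}x^{*}$ is ``pessimistic'' in the sense that it is attained on $C^{*}(E)$. Writing $\pi^{*}\equiv\max_{p\in C}p(E)$ (so that $p(E)=\pi^{*}$ for every $p\in C^{*}(E)$), I would first establish the following threshold lemma: for every strict $\succsim$-nonnull $E$, every $f$, and every $x^{*}\succsim\bar{x}_{E,f}$,
\begin{equation*}
	U(f_{E}x^{*})=\min_{p\in C^{*}(E)}\int_{E}u(f)\,dp+u(x^{*})(1-\pi^{*}).
\end{equation*}
This is the analytic content of the behaviorally defined threshold, and its proof is where Assumption~\ref{asum} is used: finitely many extreme points guarantees that the minimizer of $p\mapsto\int_{E}u(f)\,dp+u(x^{*})(1-p(E))$ over $C$ lands in $C^{*}(E)$ once $u(x^{*})$ exceeds a finite bound, while unboundedness of $u(X)$ guarantees that consequences above that bound exist. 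I would also record that, since $x\precsim\bar{x}_{E,f}$ and $\bar{x}_{E,f}$ is good enough, any act $x_{E}x^{*}$ with a worse constant on $E$ is evaluated on $C^{*}(E)$ as well, giving $U(x_{E}x^{*})=\pi^{*}u(x)+u(x^{*})(1-\pi^{*})$.

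Granting the lemma, the direction ``ML $\Rightarrow$ CR-S'' is a direct computation. Fix $f$ and let $x$ be its conditional certainty equivalent, $f\sim_{E}x$. Under ML, $u(x)=\tfrac{1}{\pi^{*}}\min_{p\in C^{*}(E)}\int_{E}u(f)\,dp$, using $p(E)=\pi^{*}$ on $C^{*}(E)$. Applying the lemma with $x^{*}=\bar{x}_{E,f}$ (taken good enough that also $x\precsim\bar{x}_{E,f}$) yields
\begin{equation*}
	U(f_{E}\bar{x}_{E,f})=\pi^{*}u(x)+u(\bar{x}_{E,f})(1-\pi^{*})=U(x_{E}\bar{x}_{E,f}),
\end{equation*}
which is exactly $f_{E}\bar{x}_{E,f}\sim x_{E}\bar{x}_{E,f}$, i.e.\ CR-S.

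For ``CR-S $\Rightarrow$ ML'' I would run the same computation backwards and then invoke uniqueness of the MEU set. Fix $f$ and its conditional certainty equivalent $x$. CR-S asserts $U(f_{E}\bar{x}_{E,f})=U(x_{E}\bar{x}_{E,f})$; by the lemma the two sides equal $\min_{p\in C^{*}(E)}\int_{E}u(f)\,dp+u(\bar{x}_{E,f})(1-\pi^{*})$ and $\pi^{*}u(x)+u(\bar{x}_{E,f})(1-\pi^{*})$, whence $\min_{p\in C_{E}}\int u(f)\,dp=u(x)=\min_{p\in C^{*}(E)}\int_{E}u(f)\,\tfrac{dp}{p(E)}$. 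Since this holds for every $f$ and, by unboundedness together with affinity of $u$, the functions $u(f)$ (up to positive scaling and additive constants, which shift the minimum uniformly across probability measures) form a separating family, the lower support functions of the closed convex sets $C_{E}$ and $\{p(\cdot\mid E):p\in C^{*}(E)\}$ coincide; by the uniqueness part of the representation in \cite{GILBOA1989141} the two sets are equal, which is ML updating.

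The main obstacle is the threshold lemma, together with the bookkeeping that ties the behaviorally defined $\bar{x}_{E,f}$ to the geometric transition of the MEU minimizer onto $C^{*}(E)$. Two points need care. First, I would show that the biconditional defining $\bar{x}_{E,f}$ really does pin down the onset of pessimism: the mixtures $\lambda f_{E}x^{*}+(1-\lambda)y_{E}z$ are built precisely so that indifference is preserved on both sides if and only if $f_{E}x^{*}$ and $x_{E}x^{*}$ are evaluated at a common minimizer in $C^{*}(E)$, and under Assumption~\ref{asum} this common-minimizer property should be equivalent to $x^{*}$ being sufficiently good. Second, when $C^{*}(E)$ already contains an ex-ante conditional worst prior for $f$ (so that ML and FB happen to agree on that $f$), the bare pessimism threshold can fall below the conditional certainty equivalent; the lemma must therefore be applied with $\bar{x}_{E,f}$ chosen good enough that $x\precsim\bar{x}_{E,f}$, which is legitimate since every $x^{*}\succsim\bar{x}_{E,f}$ is itself a threshold. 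Making this choice work uniformly across all acts, rather than act by act, is the delicate part of the argument.
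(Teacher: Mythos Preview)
Your proposal is correct and rests on the same structural lemma the paper proves (Lemma~\ref{lem} together with Lemma~\ref{lem4}): once $x^{*}\succsim\bar{x}_{E,f}$, the ex-ante evaluation of $f_{E}x^{*}$ is attained on $C^{*}(E)$, giving $U(f_{E}x^{*})=\min_{p\in C^{*}(E)}\int_{E}u(f)\,dp+(1-\pi^{*})u(x^{*})$. Your necessity argument is identical to the paper's.

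The only substantive difference is in the sufficiency direction. The paper argues by contrapositive: if $C_{E}$ differs from the ML posterior set, a strict separating hyperplane yields an act $f$ for which CR-S fails. You instead argue directly: CR-S pins down the conditional certainty equivalent of every $f$ to be the ML value, so the MEU functional induced by $C_{E}$ coincides with that induced by $\{p(\cdot\mid E):p\in C^{*}(E)\}$ on all acts, and Gilboa--Schmeidler uniqueness identifies the sets. These are dual presentations of the same convex-separation fact; the paper's version exhibits the witnessing act explicitly, while yours packages the separation inside the uniqueness theorem. Either is fine, and your handling of the bookkeeping (in particular ensuring $x\precsim\bar{x}_{E,f}$, which the paper builds into the construction of $\bar{x}_{E,f}$ in Lemma~\ref{lem4}) is on target.
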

	
	In Appendix \ref{apx1}, I provide characterizations of maximum likelihood updating without Assumption \ref{asum}. 
	
	\section{Relative Maximum Likelihood Updating}\label{rml}
	\subsection{Representations}
	Recall Definition \ref{rmldf}. Under Relative Maximum Likelihood updating, the DM applies Bayes' rule to the following subset of the set of priors $C$ for some $\alpha \in [0,1]$: 
	\begin{equation*}
		C_{\alpha}(E) \equiv \alpha C^{*}(E) + (1-\alpha)C  =  \{\alpha p + (1-\alpha)q : p \in C^{*}(E) \text{ and } q \in C \}
	\end{equation*}
	
	The parameter $\alpha$ in RML updating  represents an extent of inclination towards Maximum Likelihood relative to Full Bayesian. Thus, it is possible that a DM may not have the same inclination across all the events. In contrast, such a flexibility is impossible under either FB or ML, as the parameter in those cases, by definition, are fixed to be a constant $(\alpha \equiv 0$ or $1$). 	
	
	I provide two definitions for conditional preferences generated by RML updating that differ in the extent to which the parameter $\alpha$ is allowed to vary across events. The weaker definition captures the case in which the conditional preferences are updated by RML with parameter $\alpha[E]$. In this case, $\alpha[E]$ is potentially dependent on the observed event $E$ capturing the different inclinations across events. Such conditional preferences are defined to be represented by the Contingent RML: 
	
	\begin{definition}
		The primitive $\{\succsim_{E}\}_{E\in \Sigma}$ is represented by \textbf{contingent relative maximum likelihood updating} if for all strict $\succsim$-nonnull $E\in \Sigma$ there exists $\alpha[E] \in [0,1]$ such that for all $f\in \mathcal{F}$: 
		\begin{equation*}
			\min\limits_{p \in C_{E}} \int_{\Omega} u(f)dp = \min\limits_{p \in C_{\alpha[E]}(E)} \int_{E} u(f)\frac{dp}{p(E)}
		\end{equation*}
		where $C_{\alpha[E]}(E) \equiv \alpha[E] C^{*}(E) + (1-\alpha[E])C $.
	\end{definition}
	
	By definition, Contingent RML does not require a DM to have a consistent updating rule across events. Such inconsistency may be caused by the different maximal or minimal probability of events, different contexts, or even just different labels of events. Although one can argue whether this kind of consistency in updating is desirable or not, by offering such flexibility Contingent RML is able to accommodate a broader class of behaviors. Some experimental evidence reflects exactly such inconsistency.\footnote{See discussion in Section \ref{signal}.}
	
	Nevertheless, undoubtedly, Contingent RML may not be strong enough in applications when sharper predictions of a DM's updating behaviors are needed. For example, in the case where a DM's conditional preference is observed under only one of the events. Without a constant parameter $\alpha$, the analyst cannot draw any meaningful conclusion about this DM's updating behavior under the other events. 
	
	The stronger definition captures the case in which the conditional preferences are updated by applying RML updating with the same parameter $\alpha$ across all the events. Such conditional preferences will be defined to be represented by RML: 
	\begin{definition}
		The primitive $\{\succsim_{E}\}_{E\in \Sigma}$ is represented by \textbf{relative maximum likelihood updating} if there exists $\alpha\in [0,1]$ such that for all strict $\succsim$-nonnull $E\in \Sigma$ and for all $f\in \mathcal{F}$: 
		\begin{equation*}
			\min\limits_{p \in C_{E}} \int_{\Omega} u(f)dp = \min\limits_{p \in C_{\alpha}(E)} \int_{E} u(f)\frac{dp}{p(E)}
		\end{equation*}
		where $C_{\alpha}(E) \equiv \alpha C^{*}(E) + (1-\alpha)C $.
	\end{definition}
	
	\subsection{Preference Foundation}
	RML updating is a more general class of updating rules that includes both FB and ML as special cases. Hence, the set of conditional preferences generated by RML updating is also a superset of the conditional preferences admitting FB or ML. Consequently, the axioms characterizing FB or ML may sometimes be violated by this larger set of conditional preferences. The characterization provided in this paper helps pin down the exact relaxations of those axioms to accommodate these behaviors. 
	
	Recall that FB is characterized by Contingent Reasoning given Conditional certainty equivalent ($f \sim_{E}x $ implies $f_{E}x \sim x$). Meanwhile, it is straightforward that FB can also be characterized by Dynamic Consistency given Conditional certainty equivalent : $f_{E}x \sim x$ implies $f \sim_{E}x$ (DC-C). Moreover, DC-C can be equivalently written as:\footnote{To see why they are equivalent, the necessity of the original statement of DC-C is immediate by letting $f$ to be a constant act, i.e. $f = x$. For sufficiency, first notice that for each $f$ there exists $x_{f}$ such that $f_{E}x_{f} \sim x_{f}$ holds: there exists $\underline{x}$ and $\overline{x}$ such that $f_{E}\underline{x} \succsim \underline{x}$ and $f_{E}\overline{x} \prec \overline{x}$; meanwhile $U(f_{E}x)$ is continuous and increasing in $x$. Then the original DC-C implies $f \sim_{E} x_{f}$. For any $g$ with $f_{E}x_{f} \sim g_{E}x_{f} \sim x_{f}$, the last indifference further implies $g \sim_{E} x_{f}$ by the original DC-C. }\\ 
	
	\textbf{Axiom DC-C} (\textbf{D}ynamic \textbf{C}onsistency given \textbf{C}onditional certainty equivalent). 
	
	For all $f,g \in \mathcal{F}$ and for all $x\in X$ with $x \sim_{E} f$, if  $f_{E}x \sim g_{E}x$, then $f \sim_{E}g$. \\
	
	Notice that the quantifier $x \sim_{E} f$ is simply a definition of the constant act $x$ needed for the statement of this axiom.
	
	Similarly, ML is characterized by Contingent Reasoning given Sufficiently good consequences ($f \sim_{E}x $ implies $f_{E}\bar{x}_{E,f} \sim x_{E}\bar{x}_{E,f} $ ). It can be shown to be also characterized by Dynamic Consistency given Sufficiently good consequences: $f_{E}\bar{x}_{E,f} \sim x_{E}\bar{x}_{E,f} $  implies $f \sim_{E}x $ (DC-S). 
	
	Let $\bar{x}_{E,f,g} \equiv \max\{\bar{x}_{E,f}, \bar{x}_{E,g} \}$, DC-S can also be equivalently written as: \\
	
	\textbf{Axiom DC-S}  (\textbf{D}ynamic \textbf{C}onsistency given \textbf{S}ufficiently good consequences).
	
	For all $f,g \in \mathcal{F}$, if $f_{E}\bar{x}_{E,f,g} \sim g_{E}\bar{x}_{E,f,g} $, then $f \sim_{E} g$. \\
	
	The characterization of conditional preferences generated by RML updating relies on weakening these four axioms: CR-C, CR-S, DC-C, and DC-S. On the one hand, this reliance is natural. Because RML updating includes both FB and ML as special cases, the behaviors implied by these two rules should be accommodated under RML. On the other hand, this also highlights that all the axioms characterizing RML restrict behaviors only when the act paying on the complement event $E^{c}$ is either the conditional certainty equivalent or the sufficiently good consequences.  As a result, the existing intuitions for the basic axioms can be easily adapted to understand the axioms in the present paper. 
	
	Consider the following two axioms: \\
	
	\textbf{Axiom CR-UO} (\textbf{C}ontingent \textbf{R}easoning with \textbf{U}ndershooting and \textbf{O}vershooting). 
	
	For all $f\in \mathcal{F}$ and $x \in X$, if $f \sim_{E}x$, then (i) $f_{E}x \precsim x$ and (ii) $f_{E}\bar{x}_{E,f} \succsim x_{E}\bar{x}_{E,f}$. \\
	
	\textbf{Axiom DC-CS} (\textbf{D}ynamic \textbf{C}onsistency given \textbf{C}onditional certainty equivalent and \textbf{S}ufficiently good consequences).
	
	For all $f,g \in \mathcal{F}$ and for all $x\in X$ with $x \sim_{E} f$, if  (i) $f_{E}x \sim g_{E}x$ and (ii) $f_{E}\bar{x}_{E,f, g} \sim g_{E}\bar{x}_{E,f, g}$, then $ f\sim_{E}g$. \\
	
	Notice that CR-UO is a relaxation of CR-C and CR-S, whereas DC-CS is a relaxation of DC-C and DC-S. However, the relaxations take different forms.
	
	CR-UO keeps the common premise of CR-C and CR-S ($f \sim_{E}x$) but relaxes the conclusion of the two axioms to weak inequalities. For a behavioral interpretation of the direction of those inequalities, notice $f \sim_{E} x$ says that the DM is indifferent given the additional information (event $E$). For the ex-ante comparison between $f_{E}x$ and $x$, information reduces ambiguity associated with $f_{E}x$ but does not for $x$. Thus, without information, the DM's evaluation of $f_{E}x$, with more ambiguity, would be \textbf{undershooting}, i.e. lower than $x$. On the other hand, information reduces part of the ambiguity for $f_{E}\bar{x}_{E,f}$ but all of the ambiguity for $x_{E}\bar{x}_{E,f}$. Hence, the ex-ante evaluation of $f_{E}\bar{x}_{E,f}$, with less additional ambiguity, will be \textbf{overshooting}, i.e. higher than $x_{E}\bar{x}_{E,f}$.	
	
	DC-CS keeps the common conclusion of DC-C and DC-S ($f \sim_{E} g$) but strengthens the premise by requiring the premises of both DC-C and DC-S to hold. Behaviorally, this axiom says that the DM finds herself to be conditionally indifferent between two acts only if they are sufficiently similar in the sense that are ex-ante indifferent under various comparisons. 
	
	The following theorem asserts that these relaxations are exactly the ones needed to characterize Contingent RML: 
	
	\begin{theorem}\label{thm2}
		$\{\succsim_{E}\}_{E\in \Sigma}$ is represented by contingent relative maximum likelihood updating if and only if CR-UO and DC-CS hold for all strict $\succsim$-nonnull events $E\in \Sigma$. Furthermore for every such $E$, $\alpha[E]$ is unique if there exists $p, p' \in C$ s.t. $p(E) \neq p'(E)$. 
	\end{theorem}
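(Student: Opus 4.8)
The plan is to push everything down to three scalar functionals of an act and then match them. Write $V_E(f)=\min_{p\in C_E}\int_\Omega u(f)\,dp$ for the conditional value, put $\bar\pi_E=\max_{p\in C}p(E)$, and for $m\in\R$ set $A(f)=\min_{p\in C^*(E)}\int_E u(f)\,dp$ and $B_m(f)=\min_{p\in C}\bigl(\int_E u(f)\,dp-m\,p(E)\bigr)$; note that the full Bayesian and maximum likelihood conditional values are $V^{FB}(f)=\min_{p\in C}\int_E u(f)\,dp/p(E)$ and $V^{ML}(f)=A(f)/\bar\pi_E$. The engine of the whole argument is a master identity: since $C_\alpha(E)=\alpha C^*(E)+(1-\alpha)C$ is a product set and $p(E)=\bar\pi_E$ on $C^*(E)$, one gets $\min_{p\in C_\alpha(E)}\int_E(u(f)-m)\,dp=\alpha\bigl(A(f)-m\bar\pi_E\bigr)+(1-\alpha)B_m(f)$, so RML with parameter $\alpha$ yields $V_E(f)=m$ iff $\alpha\bigl(A(f)-m\bar\pi_E\bigr)+(1-\alpha)B_m(f)=0$. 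As $m\mapsto A(f)-m\bar\pi_E$ and $m\mapsto B_m(f)$ are continuous and strictly decreasing (every $p(E)>0$ on a strict nonnull $E$), the left side is strictly decreasing in $m$, hence this equation has a unique root; this is exactly what lets me recover $V_E$ from $(A,B_m)$ and one number $\alpha$. I would first record the ex-ante computations $U(f_E x)=u(x)+B_{u(x)}(f)$ when $x\sim_E f$, and $U(f_E\bar x_{E,f})=A(f)+u(\bar x_{E,f})(1-\bar\pi_E)$ when $\bar x_{E,f}$ is sufficiently good (so $f_E\bar x_{E,f}$ is evaluated on $C^*(E)$), with the analogues for $x_E\bar x$ and for $g$.

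Necessity is then direct. Part (i) of CR-UO is $f_E x\precsim x$, i.e.\ $B_{V_E(f)}(f)\le 0$, which holds because $C_\alpha(E)\subseteq C$ forces $V_E\ge V^{FB}$; part (ii) is $V_E\le V^{ML}$, i.e.\ $A(f)-V_E(f)\bar\pi_E\ge 0$, which holds because $C^*(E)\subseteq C_\alpha(E)$. For DC-CS, the premises translate through the identities above into $A(f)=A(g)$ and $B_m(f)=B_m(g)$ at $m=V_E(f)$; substituting these into the master identity satisfied by $f$ shows $g$ solves $\alpha\bigl(A(g)-m\bar\pi_E\bigr)+(1-\alpha)B_m(g)=0$ with the same $m$, and uniqueness of the root gives $V_E(g)=m$, i.e.\ $f\sim_E g$.

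For sufficiency I start from the given MEU set $C_E$. CR-UO yields $V^{FB}(f)\le V_E(f)\le V^{ML}(f)$ for every act, equivalently $s_f:=A(f)-V_E(f)\bar\pi_E\ge0$ and $t_f:=-B_{V_E(f)}(f)\ge0$. For any act with a genuine gap ($s_f+t_f>0$) define $\alpha(f)=t_f/(s_f+t_f)\in[0,1]$, so that $f$ satisfies the master identity with this $\alpha(f)$. The crux, and the step I expect to be the main obstacle, is showing $\alpha(f)$ is common to all such acts. The tool is an affine normalization: by Assumption \ref{asum} the act $g'$ with $u(g')=\tfrac{s_f}{s_g}u(g)+c$, for the appropriate constant $c$, has $A(g')=A(f)$ and $V_E(g')=V_E(f)$, so two acts can be brought onto the same $A$- and $V_E$-level; the remaining content is that the axioms force $B_m(g')=B_m(f)$ as well, i.e.\ $t_f/s_f=t_g/s_g$. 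The honest difficulty is that DC-CS as stated delivers the implication ``$A$ and $B_m$ match $\Rightarrow$ $V_E$ match,'' whereas what is needed is its near-converse, that matching $A$ and $V_E$ forces matching $B_m$; I would obtain this by invoking DC-CS along a path of acts on which $A$ is held fixed while $V_E$ sweeps a range, propagating the common ratio at each value, and by using the convexity of $C_E$ together with the finitely many extreme points of $C$ to reduce the verification to the finitely many vertices that are active. I expect the genuinely delicate point to be gluing the ratio across the regions where distinct vertices of $C$ and of $C^*(E)$ are the minimizers, since the functionals are linear on each region but the active vertices change across regions. Once constancy is in hand, the master identity and the uniqueness of its root give $V_E=V^{RML}_{\alpha[E]}$ on all acts.

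Finally, for uniqueness of $\alpha[E]$: if $p(E)\neq p'(E)$ for some $p,p'\in C$, then $C^*(E)$ is proper and one can exhibit an act with $V^{FB}(f)<V^{ML}(f)$, i.e.\ $s_f+t_f>0$. Along such an act the root $m_\alpha$ of $H(\alpha,m):=\alpha\bigl(A(f)-m\bar\pi_E\bigr)+(1-\alpha)B_m(f)=0$ satisfies $dm/d\alpha=(s_f+t_f)/|\partial_m H|>0$, so distinct parameters induce distinct conditional preferences and $\alpha[E]$ is pinned down; when $p(E)$ is instead constant on $C$ one has $C_\alpha(E)=C$ for every $\alpha$ and the parameter is vacuously non-identified, consistent with the stated condition.
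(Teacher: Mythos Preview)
Your master-identity formulation is clean and your necessity argument is correct; it is effectively the same as the paper's Step~4 read backwards together with the set inclusions $C^*(E)\subseteq C_\alpha(E)\subseteq C$. Your sufficiency setup also matches the paper's Step~1: your $s_f$ and $t_f$ are, up to additive constants that cancel, exactly $U(f_E\bar x)-U(x_E\bar x)$ and $U(x)-U(f_Ex)$, and your $\alpha(f)=t_f/(s_f+t_f)$ is the paper's $\alpha[E,f]$.

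The gap is in the step you yourself flag as ``the honest difficulty.'' You correctly observe that DC-CS delivers the implication ``$A$ and $B_m$ match $\Rightarrow$ $V_E$ match,'' while what you need for constancy of $\alpha(\cdot)$ is that ``$A$ and $V_E$ match $\Rightarrow$ $B_m$ match.'' Your proposed resolution---sweeping $V_E$ along a path while holding $A$ fixed and ``propagating the common ratio''---does not make contact with the premise of DC-CS: to invoke the axiom at any point on your path you would already need two acts with matching $B_m$, which is precisely what you are trying to establish. Nothing in the axioms or in convexity of $C_E$ gives you that converse implication for free, and the reduction to finitely many active vertices does not help, since the difficulty is not a regularity issue but a logical one about which equalities DC-CS lets you conclude from which.

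The paper resolves this differently and constructively: it uses DC-CS \emph{in its stated direction} by manufacturing, for an arbitrary pair $f,g$, affine transformations $f_\epsilon=\lambda_0 f+(1-\lambda_0)y_0$ and $g_\lambda y$ such that the transformed pair satisfies \emph{both} premises of DC-CS simultaneously. Concretely, the second premise (matching $A$) pins down $y$ as a function of $\lambda$; substituting this into the first premise (matching $B_m$) produces a single continuous equation in $\lambda\in(0,1]$, and an intermediate-value argument---using that $f_\epsilon$ is squeezed into $[0,\epsilon]$ so that one endpoint overshoots and the other undershoots---yields a solution. Once the premises hold, DC-CS gives $V_E(f_\epsilon)=V_E(g_\lambda y)$, and your own master identity then forces $\alpha(f_\epsilon)=\alpha(g_\lambda y)$; invariance of $\alpha(\cdot)$ under mixtures with constants (certainty independence) transports this back to $\alpha(f)=\alpha(g)$. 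This construction is the missing engine in your proposal; once you insert it in place of the ``near-converse'' paragraph, the rest of your write-up goes through.
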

	
	For a high-level intuition, CR-UO is equivalent to the conditional preference is given by some intermediate set $C(E)$ with $C^{*}(E) \subseteq C(E) \subseteq C$. Given this restriction, DC-CS further pins down the set $C(E)$ to be given by the functional form of RML updating with some $\alpha[E]$. Next, I will provide a sketch of the proof of this result:
	
	\begin{proof}[Sketch of proof of Theorem \ref{thm2}] 
		
		For sufficiency of CR-UO and DC-CS, fix any strict $\succsim$-nonnull $E\in \Sigma$, the proof proceeds by the following four steps: \\
		
		\underline{Step 1.} For any $f\in \mathcal{F}$, CR-UO implies that $f_{E}\bar{x}_{E,f} \succsim x_{E}\bar{x}_{E,f} \succsim x \succsim f_{E}x$. Then it can be easily seen from Figure \ref{fig1} that there always exists  an $\alpha[E,f] \in [0,1]$  such that the following equation holds: 
		\begin{equation}\label{eq}
			\alpha[E,f] U(f_{E}\bar{x}_{E,f} ) + (1-\alpha[E,f])U(f_{E}x)  = \alpha[E,f] U(x_{E}\bar{x}_{E,f} ) + (1-\alpha[E,f]) U(x) 
		\end{equation}
		
		Furthermore, this $\alpha[E,f]$ is unique if either $f_{E}x \prec x$ or $f_{E}\bar{x}_{E,f,x}  \succ x_{E}\bar{x}_{E,f,x} $ holds. \\
		
		\begin{figure}[h]
			\centering
			\begin{tikzpicture}[x= 1cm, y = 1cm]
				\draw[thick, ->] (0,0) -- (5,0);
				\draw[thick, ->] (0,0) -- (0,5);
				\draw (0,0) -- (4.5,4.5);
				\draw (0,1) -- (4.5, 3);
				\draw [dashed] (4.5,0) -- (4.5,4.5);
				\node [below] at (4.5,0) {1};
				\node [below] at (0,0) {0};
				\node [left] at (0,0) {$U(f_{E}x)$};
				\node [left] at (0,1) {$U(x)$};
				\node [right] at (4.5,4.5) {$U(f_{E}\bar{x}_{E,f} )$};
				\node [right] at (4.5,3) {$U(x_{E}\bar{x}_{E,f} )$};
				\draw [dashed] (1.8,0) -- (1.8,1.8);
				\node [below] at (1.8,0) {$\alpha[E,f]$};
			\end{tikzpicture}
			\caption{Graphical Illustration of Step 1}
			\label{fig1}
		\end{figure}
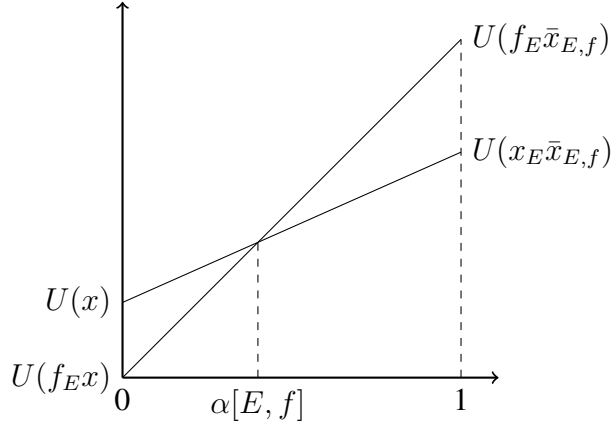
		
		\underline{Step 2.} Fix any $f\in \mathcal{F}$ such that $\alpha[E,f]$ is unique (if it does not exist then $\alpha[E]$ is not unique). First, show that for the act $f_{\lambda} y = \lambda f + (1-\lambda)y$ with any $\lambda \in (0,1]$ and $y\in X$,  one has $\alpha[E, f_{\lambda}y] = \alpha[E,f]$. Namely, $\alpha[E,f]$ is unchanged under the transformation of $f$ by taking mixtures with constant acts. This is given by equation (\ref{eq}) and the certainty independence axiom in \cite{GILBOA1989141}. 
		
		Then for any act $g\in \mathcal{F}$ that cannot be obtained by taking mixtures of $f$ with constant acts, show that there always exists such transformations of $f$ and $g$ that the pair of transformed acts satisfies the premises of DC-CS. \\
		
		\underline{Step 3.} Given the pair of transformed acts satisfying the premises of DC-CS, the axiom further implies that $\alpha[E,f]$ needs to be equal to $\alpha[E,g]$. Since the construction in step 2 works for arbitrary $g\in \mathcal{F}$, it thus concludes that $\alpha[E,f]$ needs to be a constant across all $f \in \mathcal{F}$. In other words, equation (\ref{eq}) holds with $\alpha[E]$ independent of $f$: 
		\begin{equation}\label{eq2}
			\alpha[E] U(f_{E}\bar{x}_{E,f} )+ (1-\alpha[E]) U(x) =  \alpha[E] U(x_{E}\bar{x}_{E,f} ) + (1-\alpha[E]) U(x) 
		\end{equation}
		\bigskip 
		
		\underline{Step 4.} By plugging into each term in equation (\ref{eq2}), the DM's conditional evaluation of any act $f \in \mathcal{F}$ can be shown to be represented by
		\begin{equation*}
			\min\limits_{p\in C_{\alpha[E]}(E)} \int_{E} u(f) \frac{dp}{p(E)}
		\end{equation*}
		i.e. it is represented by Contingent RML.  
	\end{proof}
	
	Theorem \ref{thm2} shows that CR-UO and DC-CS characterize conditional preferences that are given by Contingent RML where the parameter $\alpha[E]$ may be different across events. Clearly, this is because both axioms restrict behaviors only within each conditioning event $E$, yet do not explicitly restrict behaviors across events. Notice that imposing either CR-C or CR-S for all events actually also restricts behaviors across events, because the resulting updating rule takes the form $\alpha[E] \equiv 0 $ or $\alpha[E] \equiv 1$. However, the two axioms characterizing the Contingent RML do not have this additional power. \\

	To further restrict behaviors across events, let $\bar{x}_{E_{1}, E_{2}, f,g} \equiv \max\{\bar{x}_{E_{1},f,g}, \bar{x}_{E_{2},f,g} \}$ denote the sufficiently good consequence over two acts and two events. Consider the following axiom, which features a similar statement as DC-CS but across two events $E_{1}$ and $E_{2}$: \\
	
	\textbf{Axiom EC} (\textbf{E}vent \textbf{C}onsistency). 
	
	For all $f,g\in \mathcal{F}$ and $E_{1}, E_{2} \in \Sigma$ and for all $x, x_{1}^{*}, x_{2}^{*} \in X$ with $x \sim_{E_{1}} f$, $x_{1}^{*}\succsim \bar{x}_{E_{1}, E_{2}, f,g} $, $x_{2}^{*} \succsim \bar{x}_{E_{1}, E_{2}, f,g} $ and $x_{E_{1}}x_{1}^{*} \sim x_{E_{2}}x_{2}^{*}$, if  (i) $f_{E_{1}}x \sim g_{E_{2}}x$ and (ii) $ f_{E_{1}}x_{1}^{*} \sim g_{E_{2}}x_{ 2}^{*}$, then $g \sim_{E_{2}} x$. \\
	
	First, notice that EC is also a relaxation of DC-C and DC-S, because under DC-C the first premise implies $g \sim_{E_{2}} x$, and under DC-S the second premise implies $g \sim_{E_{2}} x$. It is a relaxation of DC-C and DC-S different from DC-CS. In particular, EC and DC-CS are orthogonal in the sense that they do not imply each other. 
	
	Event consistency, as suggested by its name, uses constant acts to calibrate the conditional preferences. More specifically, two acts have the same conditional certainty equivalent,  one under events $E_{1}$ and the other under $E_{2}$, if they are indifferent under the ex-ante preference when some specific act is paid on $E_{1}^{c}$ and $E_{2}^{c}$ respectively. The behavioral intuition of DC-CS can be adapted here to provide a similar explanation for this axiom. 
	
	The following theorem shows that EC achieves exactly its goal: calibrating a constant $\alpha$ to represent conditional preferences that are given by RML updating: 
	\begin{theorem}\label{thm4}
		$\{\succsim_{E}\}_{E\in \Sigma}$ is represented by relative maximum likelihood updating if and only if CR-UO, DC-CS and EC hold for all strict $\succsim$-nonnull events $E\in \Sigma$. Furthermore, $\alpha$ is unique if there exists such an $E$ that $p(E) \neq p'(E)$ for some $p,p' \in C$. 
	\end{theorem}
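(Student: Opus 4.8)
Theorem \ref{thm2} already shows that CR-UO and DC-CS together are equivalent to Contingent RML, so in both directions the sole job of EC is to force the event-dependent parameters $\alpha[E]$ produced by Theorem \ref{thm2} to collapse to a single constant. The plan is therefore to isolate, from the MEU representation, the two scalars that the representation attaches to an act--event pair and to show that the premises of EC relate them across $E_{1}$ and $E_{2}$ exactly so as to equate $\alpha[E_{1}]$ and $\alpha[E_{2}]$. For a strict $\succsim$-nonnull $E$ with maximal likelihood $m := \max_{p \in C} p(E)$, and an $f$ with conditional certainty equivalent $x$ (so $x \sim_{E} f$), I record the undershooting value $A_{E}(f) := U(f_{E}x) = \min_{p \in C}\bigl[\int_{E} u(f)\,dp + u(x)(1 - p(E))\bigr]$, computed on all of $C$, and the maximum-likelihood value $V_{E}^{*}(f) := \min_{p \in C^{*}(E)} \int_{E} u(f)\,dp$. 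Because $\bar{x}_{E,f}$ is a threshold for sufficiently good consequences, any $x^{*} \succsim \bar{x}_{E,f}$ satisfies $U(f_{E}x^{*}) = V_{E}^{*}(f) + u(x^{*})(1 - m)$ and $U(x_{E}x^{*}) = u(x)m + u(x^{*})(1 - m)$.

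Substituting these identities into the crossing equation \eqref{eq} and simplifying yields the scalar form of RML that I will use throughout:
\[
u(x)\,[\,1 - \alpha[E](1 - m)\,] = \alpha[E]\,V_{E}^{*}(f) + (1 - \alpha[E])\,A_{E}(f).
\]
Provided $A_{E}(f) \neq u(x)$, that is, whenever the undershooting is strict (which is available whenever $p(E) \neq p'(E)$ for some $p,p' \in C$ and $f$ is chosen genuinely ambiguous on $E$), this equation solves uniquely for $\alpha[E]$.

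\textbf{Sufficiency (the main step).} By Theorem \ref{thm2} the conditional preferences are Contingent RML with parameters $\alpha[E_{1}], \alpha[E_{2}]$; fixing two events with nondegenerate likelihood, I must show $\alpha[E_{1}] = \alpha[E_{2}]$. Choose $f$ on $E_{1}$ with strict undershooting, conditional certainty equivalent $x$, and a sufficiently good $x_{1}^{*}$; write $c := u(x)$ and let $m_{i}$ be the maximal likelihood of $E_{i}$. The calibration premise $x_{E_{1}}x_{1}^{*} \sim x_{E_{2}}x_{2}^{*}$ pins down $u(x_{2}^{*})$ (taken large enough to be sufficiently good for $g$), premise (ii) then forces $V_{E_{2}}^{*}(g) = V_{E_{1}}^{*}(f) - c(m_{1} - m_{2})$, and premise (i) forces $U(g_{E_{2}}x) = A_{E_{1}}(f) =: A$. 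Solving the scalar RML equation above for each event gives
\[
\alpha[E_{1}] = \frac{c - A}{V_{E_{1}}^{*}(f) - A + c(1 - m_{1})}, \qquad \alpha[E_{2}] = \frac{c - A}{V_{E_{2}}^{*}(g) - A + c(1 - m_{2})},
\]
and substituting $V_{E_{2}}^{*}(g) = V_{E_{1}}^{*}(f) - c(m_{1} - m_{2})$ makes the two denominators coincide, so $\alpha[E_{1}] = \alpha[E_{2}]$. The main obstacle is not this algebra but the existence claim it rests on: one must exhibit an act $g$ on $E_{2}$ realizing the two prescribed scalars $V_{E_{2}}^{*}(g)$ and $U(g_{E_{2}}x) = A$ simultaneously. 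Since the first depends only on the behaviour of $g$ on $C^{*}(E_{2})$ while the second also involves the minimum-likelihood priors, I would decouple them by first fixing $g$'s profile on the max-likelihood-relevant states and then adjusting its values on the remaining states, using $|\Omega| \geq 3$, the finiteness of extreme points, and unboundedness of $u(X)$ from Assumption \ref{asum} to hit both targets while keeping $c \neq A$; this verification is where the care lies.

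\textbf{Necessity and uniqueness.} Assuming RML with a constant $\alpha$, CR-UO and DC-CS follow from Theorem \ref{thm2} with $\alpha[E] \equiv \alpha$. For EC the same bookkeeping applies: premises (i)--(ii) and the calibration identity give $U(g_{E_{2}}x) = A$ and $V_{E_{2}}^{*}(g) = V_{E_{1}}^{*}(f) - c(m_{1} - m_{2})$, and plugging these into the scalar RML equation for $g$ under $E_{2}$ with the common $\alpha$ returns $u(x) = V_{E_{2}}(g)$, i.e. the conclusion $g \sim_{E_{2}} x$. Finally, uniqueness of $\alpha$ on an event with $p(E) \neq p'(E)$ for some $p,p' \in C$ is inherited directly from the uniqueness clause of Theorem \ref{thm2} together with the constancy just established.
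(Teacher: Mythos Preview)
Your overall strategy coincides with the paper's: reduce via Theorem~\ref{thm2} to Contingent RML with event-dependent parameters, and then use EC to force $\alpha[E_{1}]=\alpha[E_{2}]$. Your scalar reformulation
\[
u(x)\,[\,1-\alpha[E](1-m)\,]=\alpha[E]\,V_{E}^{*}(f)+(1-\alpha[E])\,A_{E}(f)
\]
is correct and, once one checks that it is \emph{equivalent} to $f\sim_{E}x$ under RML (which follows from the Step~4 computation in the proof of Theorem~\ref{thm2}), both your necessity argument and the algebraic step equating the two $\alpha$'s are sound.

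The genuine gap is the construction of the act $g$ on $E_{2}$ hitting the two prescribed scalars $V_{E_{2}}^{*}(g)$ and $U(g_{E_{2}}x)$. Your proposed ``decoupling''---fixing $g$ on ``max-likelihood-relevant states'' and adjusting on ``remaining states''---does not work as described: since $g$'s values on $E_{2}^{c}$ are irrelevant (the act is always $g_{E_{2}}\cdot$), both targets depend on $g$ restricted to $E_{2}$, and every state in $E_{2}$ enters both $V_{E_{2}}^{*}(g)=\min_{q\in C^{*}(E_{2})}\int_{E_{2}}u(g)\,dq$ and $U(g_{E_{2}}x)=\min_{p\in C}\int_{\Omega}u(g_{E_{2}}x)\,dp$. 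There is no state-by-state separation, and because the second target is a minimum over all of $C$, perturbing $g$ to adjust it can shift the minimizer and disturb the first target.

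The paper handles this differently and more robustly: it starts from an \emph{arbitrary} $g$ with unique $\alpha[E_{2}]$, replaces it by the mixture $g_{\lambda}y=\lambda g+(1-\lambda)y$, and uses the two free parameters $(\lambda,y)$ to hit the two premises of EC. Premise~(ii) is solved for $(1-\lambda)u(y)$ in closed form as a function of $\lambda$; substituting this into premise~(i) yields a single equation $R(\lambda)=\text{LHS}$ in $\lambda$, and an intermediate-value argument (checking $R(0)>\text{LHS}$ and $R(1)<\text{LHS}$ after first shrinking $f$ to $f_{\epsilon}$ with range in $[0,\epsilon]$) produces the required $\lambda\in(0,1)$. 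This mixture-plus-continuity device is exactly the mechanism from Step~2 of the proof of Theorem~\ref{thm2}, adapted across two events; replacing your state-adjustment sketch with it closes the gap.
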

	
	\textbf{Remark.} The two representations, Contingent RML and RML, focus on the two extreme types of restrictions on the parameters. Some may find an intermediate restriction reasonable, for example, $\alpha[E_{1}] = \alpha[E_{2}]$ implies $\alpha[E_{1}\cup E_{2}] = 	\alpha[E_{1}]$. I leave the exploration of this problem for future research. 
	
	\section{Updating Ambiguous Signals}\label{signal}
	In this paper, the state space $\Omega$ is the grand state space, which could include both payoff-relevant states and payoff-irrelevant signals. To facilitate discussions, in this section, $\Omega$ will be explicitly written as the Cartesian product of states and signals: $\Theta \times S$ with generic element $\theta \times s$. Then the event in which signal $s$ realizes is given by $E = \Theta \times s$.
	
	Given a set of priors $C \subseteq \Delta(\Theta \times S)$, its marginal distribution over the states\footnote{I will refer to payoff-relevant states simply as states hereafter, as doing so should cause no confusion.} $\Theta$ is ambiguous if there exists $p, p' \in C$ such that $p(A\times S) \neq p'(A\times S)$ for some $A\times S \in \Sigma$. Conversely, it is probabilistic over the states if  $p( A \times S) = p'(A \times S)$ for all $A\times S \in \Sigma$ and $p, p' \in C$. In the latter case, ambiguity could arise only through the signals. This section illustrates the implications of RML updating in this special environment. 
	
	Though special, this environment has been adopted in various applications, for example, \cite{Bose2014}, \citet*{BEAUCHENE2019312}, \cite{KELLNER20181}. In addition, several recent experimental papers also focus on understanding subjects' response to ambiguous signals when there is no other source of ambiguity, for example, \cite{liang2019}, \cite{shishkin2019ambiguous}, \citet*{kellner2019reacting}. Therefore, understanding the implications of RML updating in this specific environment will be helpful not only for comparison with other updating rules, but also for illustrating the theoretical contributions of RML for future applications.  
	
	In the following, I will use a stylized example to illustrate the theoretical predictions RML updating is able to offer here. 
	
	\begin{example}\label{exp1}
		There are two payoff-relevant states: $\Theta = \{ \theta_{1},\theta_2\}$ with unambiguous marginal prior $p(\theta_{1}) = \beta \in [0,1]$. The DM is evaluating a bet $f$ that pays one at $\theta_{1}$ and nothing at $\theta_{2}$. There are two signals: $S = \{s_{1}, s_{2}\}$ and the signaling structure is ambiguous, meaning that there are two possible correlations between signals and states: 
		\begin{align*} 
			p(s_{i}| \theta_{i}) = \lambda_{1}  \\
			p(s_{i}| \theta_{i}) = \lambda_{2} 
		\end{align*}
		and which correlation generates the signal is unknown. Without loss of generality, assume $(\lambda_{1} + \lambda_{2})/2 \geq 1/2$ and $\lambda_{1} \geq \lambda_{2}$. Namely, the signal $s_{1}$ is ``on average'' an informative signal for the state $\theta_{1}$. Moreover, the first signaling device governed by $\lambda_{1}$ is more accurate than $\lambda_{2}$. 
	\end{example}
	
	Suppose the DM's preference is represented by a set of priors coinciding exactly with the information given in this example. Her set of priors $C$ can be easily parameterized by a parameter $\mu \in [0,1]$ denoting the probability that the first signaling device governed by $\lambda_{1}$ is the one generating the signals. Namely, $C = \{p_{\mu} \in \Delta(\Theta \times S) : \mu \in [0,1]\}$ where $p_{\mu}$ is defined as in the following: 
	\begin{align*}
		&p_{\mu} (\theta_{1} \times \{s_{1}, s_{2}\}) = \beta \\
		&p_{\mu} (\theta_{2} \times \{s_{1}, s_{2}\}) = 1-\beta\\
		&p_{\mu}(\theta_{1}\times s_{1}) = \beta[\mu\cdot \lambda_{1} + (1-\mu) \cdot \lambda_{2}]\\
		&p_{\mu}(\theta_{2}\times s_{1}) = (1-\beta)[\mu\cdot (1-\lambda_{1}) +(1-\mu)\cdot (1-\lambda_{2}) ]\\
		&p_{\mu} (\{\theta_{1}, \theta_{2} \}\times s_{1}) = p_{\mu}(\theta_{1}\times s_{1}) + p_{\mu}(\theta_{2}\times s_{1})
	\end{align*}
	
	To fully characterize the DM's behavior in this example, there are in total six different cases: $\beta > 1/2, \beta = 1/2$ or $\beta < 1/2$ combined with either signal $s_{1}$ or $s_{2}$. Only one of the cases will be derived here in detail, and a summary of behaviors in all these cases will be provided afterwards in Table \ref{table1}. 
	
	Consider the case $\beta > 1/2$ and signal $s_{1}$ is realized. Notice that when $\beta > 1/2$, the likelihood of signal $s_{1}$, given by $p_{\mu} (\{\theta_{1}, \theta_{2} \}\times s_{1})$, is maximized when $\mu = 1$. Moreover, thanks to the simple parametrization of the set $C$, for any $\alpha \in [0,1]$ the set $C_{\alpha}(\{\theta_{1}, \theta_{2} \}\times s_{1}) $ is given by 
	\begin{equation*}
		C_{\alpha}(\{\theta_{1}, \theta_{2} \}\times s_{1}) = \{p_{\mu} \in C: \mu \in [\alpha, 1] \}
	\end{equation*}
	Under RML, the DM updates only the priors $p_{\mu}$ for $\mu \in [\alpha, 1]$, and for each $p_{\mu}$ the posterior about states after observing $s_{1}$ is given by
	\begin{equation*}
		\pi_{\mu}(\theta_{1} | s_{1}) \equiv \frac{p_{\mu}(\theta_{1}\times s_{1})}{p_{\mu} (\{\theta_{1}, \theta_{2} \}\times s_{1})}
	\end{equation*}
	
	Therefore, the DM's set of posteriors will be the following interval\footnote{Notice that $\pi_{\mu}(\theta_{1} | s_{1})$ is increasing in $\mu$.}: $\left[ \pi_{\alpha}(\theta_{1} | s_{1}), \pi_{1}(\theta_{1} | s_{1})\right]$. Clearly, under FB, the DM's set of posteriors will be $\left[ \pi_{0}(\theta_{1} | s_{1}), \pi_{1}(\theta_{1} | s_{1})\right]$, and under ML the DM will end up with a single posterior: $\pi_{1}(\theta_{1} | s_{1})$. 
	
	Furthermore, under MEU the DM's conditional evaluation of the bet $f$ will be given by the lowest posterior of state $s_{1}$. Let $x_{f}^{FB}, x_{f}^{RML}$ and $x_{f}^{ML}$ denote the conditional certainty equivalent of the bet $f$ under FB, RML, and ML updating. Then one has $x_{f}^{FB} \leq x_{f}^{RML} \leq x_{f}^{ML}$ for all $\alpha \in [0,1]$. Especially, $x_{f}^{RML}$ is increasing with respect to $\alpha$ and coincides with $x_{f}^{FB}$ and $x_{f}^{ML}$ when $\alpha = 0$ and $\alpha = 1$ respectively. 
	
	Table \ref{table1} summarizes the predictions of behaviors under RML in all six cases. The first two rows are the only cases where the DM's evaluation of $f$ depends on the value of $\alpha$. In other words, those are the cases where different updating rules affect one's conditional evaluation of $f$. Thus, RML is able to provide richer predictions than either FB or ML. 
	
	The third and fourth rows are the cases where the MEU evaluation of $f$ is given by the posteriors updated from the maximum likelihood priors. Thus, even though the updated beliefs are different under FB and ML, MEU makes their evaluations the same. For this reason, although the updated belief under RML is also different from FB and ML for any $\alpha \in (0,1)$, the MEU evaluation under RML is still the same. 
	
	Finally, the fifth and sixth rows represent the same type of special case where every prior agrees on the likelihood of every signal. It is the case where FB coincides with ML, and therefore RML does not have an extra bite. All the priors will be updated under RML for all $\alpha \in [0,1]$, and thus the updated belief will always be the same. 
	
	\begin{table}[h]
		\centering
		\begin{tabular}{ccccc}
			\hline
			$\beta$ & Signal & ML prior & Evaluation& Comparison with\\
			&&&of $f$&probabilistic signal \\ 
			\hline 
			&             &                     &                                                    & $\alpha <1/2$: lower\\
			$>1/2$& $s_{1}$ &  $\mu = 1$ &  $\pi_{\alpha}(\theta_{1}|s_{1})$ & $\alpha = 1/2$: equal\\
			&              &                   &                                                    & $\alpha >1/2$: higher\\
			
			&             &                     &                                                    & $\alpha <1/2$: lower\\
			$>1/2 $& $s_{2}$ & $\mu = 0$ & $\pi_{\alpha}(\theta_{1}|s_{2})$  & $\alpha = 1/2$: equal\\
			&              &                   &                                                    & $\alpha >1/2$: higher\\

			$<1/2 $& $s_{1}$ & $\mu = 0$ & $\pi_{0}(\theta_{1}|s_{1})$  &  All $\alpha$: lower\\
			
			$<1/2 $& $s_{2}$ & $\mu = 1$ & $\pi_{1}(\theta_{1}|s_{2})$  & All $\alpha$: lower  \\
			
			$=1/2 $& $s_{1}$ & All $\mu$ & $\pi_{0}(\theta_{1}|s_{1})$ & All $\alpha$: lower \\ 
			
			$=1/2 $& $s_{2}$ & All $\mu$ & $\pi_{1}(\theta_{1}|s_{2})$ &  All $\alpha$: lower\\ 
			\hline 
		\end{tabular} 
		\caption{Summary of Example \ref{exp1} under RML}
		\label{table1}
	\end{table}
	
	The last column of Table \ref{table1} presents a comparison highlighted in experiments by \cite{liang2019}. It compares the DM's s conditional evaluations of $f$ under this ambiguous signaling structure with a probabilistic signaling structure where the correlation equals to $(\lambda_{1} + \lambda_{2})/2$. The latter is designed to reflect an average accuracy of the two signaling devices in the ambiguous signaling structure. For probabilistic signals, the DM simply follows Bayesian updating.  
	
	In the case where $s_{1}$ is realized, notice that the DM's posterior under the probabilistic signaling structure is exactly given by $\pi_{1/2}(\theta_{1} | s_{1})$. Therefore, for $\beta \leq 1/2$ the DM's conditional evaluation under ambiguous signal is always \textbf{lower} than her conditional evaluation under the probabilistic signal. When $\beta > 1/2$, this comparison will depend on the value of $\alpha$. More specifically,  $\alpha < 1/2$ implies that the conditional evaluation under ambiguous signal is lower, $\alpha > 1/2$ implies that the conditional evaluation under ambiguous signal is higher, and $\alpha = 1/2$ implies they are the same. Thus, from the perspective of RML, this type of comparison actually reflects the DM's attitudes towards discarding priors based on likelihood as captured by $\alpha$. 
	
	For the experimental results shown by Table 4.1 in \cite{liang2019}, each row there can be categorized into some case in Table \ref{table1} here. In his terminology, $s_{1}$ is ``good news'' and $s_{2}$ is ``bad news''. A rather interesting pattern emerging from the choice data is that when $\beta > 1/2$, the subjects' comparison is lower after receiving good news ($s_{1}$) and higher after receiving bad news ($s_{2}$). Notice that in Table \ref{table1} the two directions of the comparisons are aligned with $\alpha < 1/2$ and $\alpha > 1/2$ respectively. This implies that the subjects cannot have the same attitude towards discarding priors for all the signals. This pattern can be captured by Contingent RML, which captures exactly such flexibility in behaviors. 
	
	Moreover, the other rows of Table 4.1 correspond to the cases where $\alpha$ is irrelevant for this comparison. The choice data there is also aligned with behaviors under RML updating.\footnote{Only row 8 of Table 4.1 in \cite{liang2019} is different from the prediction of RML. The last two rows do not specify the signals and thus are ignored.}
	
	Therefore, not only does Contingent RML with $\alpha[\{\theta_{1}, \theta_{2} \}\times s_{1}]  <1/2$ and $\alpha[\{\theta_{1}, \theta_{2} \}\times s_{2}]  >1/2$ offer an interpretation for the different directions of comparison across signals, but it also provides an approach to accommodate almost all the behavioral patterns in that experiment.\footnote{The same conclusion can be drawn also from a within-subject comparison shown in Table B.1 of \cite{liang2019}}
	
	In summary, for this special case where a DM has probabilistic belief over states and ambiguous belief over signals, RML is able to provide richer predictions. In addition, such richness also proves to be useful for applications involving ambiguous signals. 
	
	\section{Uniform-likelihood Ambiguous Persuasion: An Example}\label{persuasion}
	In settings such as mechanism design or information design, many recent papers show that the designers could gain strictly more payoff by introducing ambiguity in their communications. These findings are observed primarily in the case where the agent/receiver is modeled by MEU preferences with FB updating. This section uses an example in the context of information design to explore both the robustness of such a finding once the FB assumption is relaxed to RML and how the generalization to RML may affect optimal persuasion. 
	
	Consider the persuasion game studied by \cite{Kamenica2011}, where the sender commits to a signaling device and the receiver takes actions contingent on the signal realizations. Bayesian persuasion describes the case where the available signaling devices for the sender are only probabilistic: Each device is a mapping from the state space to distributions over the signal space. 
	
	Suppose the sender also has access to an ambiguous device, which specifies a set of probabilistic devices. Moreover, which probabilistic device is used to generate the signals is unknown to both the sender and receiver.\footnote{The sender can choose a set of probabilistic devices and delegate the choice from this set to a third party or to the draw from an Ellsberg urn to make the signals ambiguous.} \citet*{BEAUCHENE2019312} (BLL henceforth) show that when the receiver's preference is represented by MEU with FB updating, the sender is able to gain strictly more payoff by using an ambiguous device compared to the optimal probabilistic device.
	
	Notice that once the sender commits to an ambiguous device, the receiver is facing a scenario of updating ambiguous signals extensively discussed in the previous section. Clearly, the FB assumption excludes the possibility that the receiver may be willing to make inference about the devices based on the likelihood of the realized signal.  This raises the possibility that the strict gain from using ambiguous signals may hinge on assumptions concerning such inference. 
	
	RML updating provides a convenient tool to address this issue. More specifically, with the parametrization of RML, one can check for which values of the parameter $\alpha$ the same conclusion holds. 
	
	To give an example of such analysis, I take the illustrative example from BLL and relax their FB assumption to RML. For this particular example, the ambiguous device they construct is more beneficial than the optimal probabilistic device only when $\alpha = 0$, i.e. only when RML reduces to FB. In other words, their finding of strict benefits from using ambiguous signals seems to crucially depend on the specific assumption of FB. 
	
	Nonetheless, in the same example, I further show that there exists a \textit{uniform-likelihood} ambiguous device. It induces the same belief and thus action from the receiver for all possible values of $\alpha$ under RML. Moreover, the sender gains strictly more payoff from this device than from the optimal probabilistic device. In other words, the strict benefit from using ambiguous signals in this example is robust to the possibility that the receiver may make likelihood-based inference from the realized signal. 
	
	\begin{example}[Illustrative Example in BLL]\label{exp4}
		There are two states $\{ \omega_{l}, \omega_{h}\}$ with a uniform marginal prior, and the receiver has three feasible actions: $\{a_{l}, a_{m}, a_{h}\}$. Payoff of the sender and receiver for each state and action is given as follows: 
		\begin{center}
			\begin{tabular}{c|cc}
				& $\omega_{l}$ & $\omega_{h}$ \\ 
				\hline 
				$a_{l}$ & (-1,3) & (-1,-1) \\ 
				$a_{m}$ & (0,2) & (0,2) \\ 
				$a_{h}$ & (1,-1) & (1,3) \\ 
			\end{tabular} 
		\end{center}
		where in each cell, the first number is the sender's payoff and the second is the receiver's. 
	\end{example}
	
	Notice the sender always prefers the receiver to take higher action, yet the receiver prefers to choose the action that matches the state. The sender optimizes her ex-ante payoff under MEU preferences. Under Bayesian persuasion, the sender's optimal payoff is given by ($u_{s}$ stands for the sender's utility function)
	\begin{equation*}
		\frac{1}{2}u_{s}(a_{m}) + \frac{1}{2}u_{s}(a_{h})
	\end{equation*}
	
	BLL shows that the sender's optimal value under ambiguous persuasion is strictly higher than this value and can be achieved by the following ambiguous device.\footnote{The construction here is equivalent to BLL but notationally simpler.} 
	
	Let $\{m_{l}, m_{l}', m_{h}\}$ be the set of signals. Consider an ambiguous device contains the following two probabilistic devices $\pi_{1}$ and $\pi_{2}$: For some $\lambda \in [0,1]$, 
	
	\begin{table}[H]
		\centering
		\begin{tabular}{c|cc}
			$\pi_{1}(m|\omega)$ & $\omega_{l}$ & $\omega_{h}$ \\ 
			\hline 
			$m_{l}$ & $2/3\lambda$ & 0 \\ 
			$m_{l}'$ & $3/4(1-\lambda)$ & $1/4(1-\lambda)$ \\ 
			$m_{h}$ & $1/3\lambda + 1/4(1-\lambda)$ & $\lambda + 3/4(1-\lambda)$ \\ 
		\end{tabular} 
	\end{table}
	
	\begin{table}[H]
		\centering
		\begin{tabular}{c|cc}
			$\pi_{2}(m|\omega)$ & $\omega_{l}$ & $\omega_{h}$ \\ 
			\hline 
			$m_{l}$ & $3/4(1-\lambda)$ & $1/4(1-\lambda)$ \\ 
			$m_{l}'$ & $2/3\lambda$ & 0 \\ 
			$m_{h}$ & $1/3\lambda + 1/4(1-\lambda)$ & $\lambda + 3/4(1-\lambda)$ \\ 
		\end{tabular} 
	\end{table}
	
	Given this device the receiver's set of posteriors updated from FB updating is: 
	\begin{equation*}
		\begin{split}
			&p(\omega_{h}|m_{l}) = \{0, 1/4\} \\
			&p(\omega_{h}|m_{l}') = \{1/4, 0\} \\
			&p(\omega_{h}|m_{h}) = \{3/4, 3/4\}
		\end{split}
	\end{equation*}
	where the first and second posterior in each set is updated from $\pi_{1}$ and $\pi_{2}$ respectively. Given these posteriors, the receiver with MEU preference would take action $a_{m}$ when signal $m_{l}$ or $m_{l}'$ is realized and takes action $a_{h}$ when signal $m_{h}$ is realized. The posterior $p(\omega_{h}|m_{l}) = 1/4$ is crucial. Because any posterior assigning a smaller probability to $\omega_{h}$ would induce the receiver to take action $a_{l}$, which makes the sender worse off. 
	
	Given the receiver's action for each signal, the sender's ex-ante payoff from this ambiguous device is given by 
	\begin{equation}\label{ap}
		\left[\frac{1}{2}(1-\lambda) + \frac{1}{3}\lambda\right]u_{s}(a_{m}) + \left[\frac{1}{2}(1-\lambda) + \frac{2}{3}\lambda \right]u_{s}(a_{h})
	\end{equation}
	which is strictly higher than Bayesian persuasion when $\lambda > 0$ and is increasing in $\lambda$. Therefore, the optimal ambiguous persuasion can be approached by letting $\lambda \rightarrow 1$. Notice that $\lambda$ cannot be exactly one. \\
	
	However, the overall probabilities of sending signals given each device also depend on $\lambda$. For example, the overall probability of $m_{l}$ under the two devices are: 
	\begin{align*}
		&\pi_{1}(m_{l}) = \frac{1}{2} \cdot \frac{2}{3}\lambda = \frac{1}{3}\lambda\\
		&\pi_{2}(m_{l}) = \frac{1}{2} \cdot \frac{3}{4}(1-\lambda) + \frac{1}{2} \cdot \frac{1}{4}(1-\lambda) = \frac{1}{2}(1-\lambda)
	\end{align*}
	Clearly, when $\lambda \rightarrow 1$, these two probabilities are severely different, as $\pi_{1}(m_{l})$ goes to $1/3$ and $\pi_{2}(m_{l})$ goes to $0$. Intuitively, knowing $\lambda \rightarrow 1$, whenever the signal $m_{l}$ is observed, the receiver should be almost sure that it is generated by device $\pi_{1}$. However, the crucial posterior $1/4$ inducing action $a_{m}$ is in fact generated by the other device $\pi_{2}$. 
	
	Indeed, for RML updating with any $\alpha > 0$, the receiver would find $a_{l}$ to be strictly better than $a_{m}$. One can also verify that the sender's payoff is equivalent to some non-optimal probabilistic device. Thus, she instead becomes strictly worse off using ambiguous signals compared to using the optimal probabilistic device when the receiver slightly deviates from FB in the direction of ML. 
	
	Nonetheless, one can make the ambiguous device uniform-likelihood by letting $\lambda = 3/5$. In this case, the receiver cannot use likelihood of the realized signal to make any inference about the devices. It is the case where FB and ML coincide, and the receiver will always update with respect to both devices under RML no matter what his $\alpha$ is. As the crucial posterior $1/4$ will always be updated, the sender is able to induce the action $a_{m}$ when signal $m_{l}$ or $m_{l}'$ realizes.  In addition, the sender's payoff will be given by equation (\ref{ap}) with $\lambda = 3/5$, which is strictly higher than Bayesian persuasion. 
	
	Therefore, a uniform-likelihood ambiguous device can guarantee the sender strictly more payoff than Bayesian persuasion in this example. This result suggests that the strict benefit from using ambiguous signals is robust to the concern that the receiver may make likelihood-based inference about the devices from the realized signals. 
	
	\section{Related Literature}\label{discussion}
	This paper adds to the literature on dynamic choice under ambiguity by proposing the Relative Maximum Likelihood updating rule and providing an axiomatic foundation for it. RML is motivated by the idea of using observed information to refine the initial belief in updating. For multiple priors, Full Bayesian \citep*{pires2002rule} and Maximum Likelihood \citep*{GILBOA199333} are the two extremes of refining the initial set of priors according to the likelihood of the observed information. RML is able to capture intermediate behaviors between these two. 
	
	\subsection{Refining the Initial Belief in Updating}
	The idea of refining beliefs as new information arrives is essential in many different non-Bayesian updating rules, and it is certainly not an exclusive feature of updating multiple priors. 
	
	For example, when the initial belief is a single distribution over the states, \cite{Ortoleva2012} characterizes the hypothesis testing updating rule. Under this rule, if the probability of the observed event is too small according to the initial belief, the DM will then find another prior as her revised belief and apply Bayes' rule to this new prior for updating. 
	
	Also in the case when the initial belief is a singleton, \cite{zhao2017} characterizes the Pseudo-Bayesian updating rule when information takes the form ``event $A$ is more likely than event $B$''. In the case where received information contradicts the DM's initial belief, she also finds another prior as her revised belief and applies Bayes' rule to update. 
	
	In cases where the initial belief is a set of priors, one way of refining is discarding priors from the initial set according to some criterion. RML belongs to this category. An updating rule proposed by \cite{Epstein2007} uses likelihood ratio test in statistics as a criterion for discarding priors. More discussions about the differences between RML and their rule can be found in the following subsection. 
	
	The dynamic consistent updating rule characterized in \cite{hanany2007updating} also features discarding priors, and the criterion there is to maintain the optimality of the ex-ante optimal act. 
	
	Yet another way of refining is to consider a different set of priors, where it is possible that some priors are not presented in the initial belief. \cite{Ortoleva2014} characterizes the hypothesis testing updating rule for multiple priors: if the likelihood of the observed event is too low under some prior in the initial belief, then the DM will revise her initial belief and change to another set of priors for updating. 
	
	When the ex-ante preference admit both maxmin expected utility and Choquet expected utility representations, \cite{cheng2021extended} proposes and characterizes the related Extended Relative Maximum Likelihood updating rule. I refer to that paper for a detailed discussion about its relation to RML. 
	
	Beyond the theoretical developments regarding this idea of refining the initial belief, \citet*{DEFILIPPIS2021105188} also identify such behavior in a social learning lab experiment. Their finding suggests that the non-Bayesian behavior observed in the experiment is consistent with an updating rule under which the subjects revise their initial belief according to the information received. In another experiment, \cite{KATHLEENNGANGOUE2021105282} also observes that a non-negligible group of the subjects update their ambiguous belief with ML updating.

	\subsection{Comparison with Likelihood-ratio Updating}\label{likelihood}
	An updating rule proposed by \cite{Epstein2007} and axiomatized by \cite{kovach2015ambiguity} and \cite{HILL2021105209} is also able to capture intermediate behaviors between FB and ML. It  uses a likelihood ratio test as the criterion to discard priors in updating. Formally, given a threshold $\lambda \in [0,1]$, the DM applies Bayes' rule to update a prior $p$ in the set of priors $C$ if and only if the following inequality holds: 
	
	\begin{equation}\label{lrt}
		\frac{p(E)}{\max\limits_{p \in C}p(E)} \geq \lambda
	\end{equation}
	
	Although similar, RML updating is behaviorally different from likelihood-ratio updating (even with a single conditioning event) in the sense that they imply different conditional behaviors. An intuition for that is RML crucially depends on the the set of priors, especially its shape, but the likelihood-ratio updating does not depend on the set of priors in the same way.
	
	In Figure \ref{fig:1}, let the largest triangle represent the simplex of probability distributions over the three states. The set of priors are given by the smaller triangle. Let the conditioning event be $E = \{\omega_{1}, \omega_{2}\}$; then clearly here $C^{*}(E) $ is the extreme point at the bottom of $C$. 
	
	\begin{figure}[h]
		\centering
		\begin{tikzpicture}[x=1.2cm, y = 1.2cm]
			\draw[black] (0,0) -- (5,0) -- (2.5, 4.33) --cycle;
			\node[right] at (5,0) {\small$\omega_{2}$}; 
			\node[right] at (2.5, 4.33) {\small$\omega_{3}$};
			\node[left] at (0,0) {\small$\omega_{1}$};
			\filldraw[blue!20] (2.5,0.4) -- (2,0.7)  -- (2.85,1.5) --cycle;
			\draw[black]  (2.5,0.4) -- (2,0.7)  -- (2.85,1.5) --cycle;
			\draw [black] (2.5,0.4) --(3.2,2.6) -- (1.5,1) -- cycle;
			\node[below] at (2.2,2.2) {\small$C$};
			\node [below] at (2.5, 0.4) {\tiny$C^{*}(\{\omega_{1}, \omega_{2}\})$};
			\node at (3.2, 0.8) {\tiny $C_{\alpha}(\{\omega_{1}, \omega_{2}\})$};
			\draw[thick, dashed,red] (1.5,1.5) -- (3,1.5);
			\draw[dashed, red] (2.5,4.33) -- (3.04, 0);
			\draw[dashed, red] (2.5,4.33) -- (1.2,0);
			
			\draw[thick, dashed, olive] (1.5,0.7) -- (3,0.7);
			\draw[dashed, olive]  (2.5,4.33) -- (1.9,0);
			\draw[dashed, olive]  (2.5,4.33) -- (2.58,0);

			\draw[thick, red] (1.2,0) -- (3.04,0);
			\draw[thick, blue] (1.9,0) -- (3.04,0);
			\draw[thick, olive] (1.9,0) -- (2.58,0);
			
			\draw[decorate,decoration={brace,mirror,raise = 2pt},black] (1.9,0) -- (3.04,0); 
			\node[below] at (2.5,-0.1) {\tiny RML}; 
		\end{tikzpicture}
		\caption{A Graphical Illustration of RML}
		\label{fig:1}
	\end{figure}
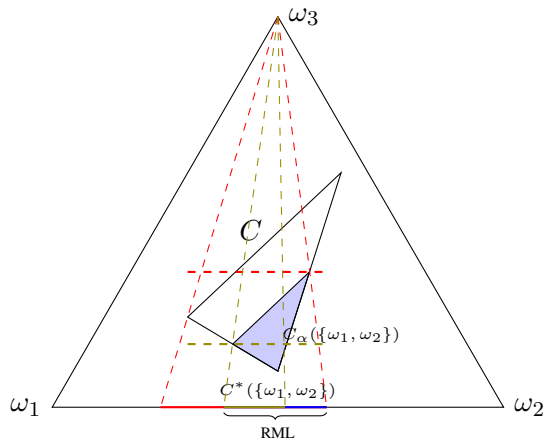
	
	For RML updating with some interior value of  $\alpha $, the set $C_{\alpha}(E)$ is given by the blue shaded triangle. Its projection to the bottom line of the simplex is the set of posteriors. 
	
	However, under the likelihood-ratio updating, the DM updates the subset of $C$ that is below some horizontal line. Two possible horizontal lines are depicted in the graph using dashed lines. Notice that there does not exist a horizontal line that would result in the same subset of priors as $C_{\alpha}(E)$; neither does there exist a horizontal line that would yield the same set of posteriors. Therefore, the conditional preference generated by RML updating here cannot be generated by the likelihood-ratio updating rule. 
	
	More importantly, such a distinction highlights a feature of RML: it does not discard priors based \textit{only} on likelihood. It first uses likelihood to determine the level of discarding (given by the horizontal red dashed line in Figure \ref{fig:1}), and then refines the initial set of priors, taking into account its shape (given by the blue shaded triangle in Figure \ref{fig:1}). This type of refinement may be heuristically motivated by the fact that a DM takes her initial set of priors very seriously, but I would like to highlight that it is fundamentally motivated by the behavioral axioms provided in this paper. 
	
	In fact, the difference between these two updating rules is clearer through the lens of axiomatizations. The conditional preference generated by the likelihood-ratio updating violates one of the key axioms: DC-CS. 
	
	\subsection{Dynamic Choice Under Ambiguity}
	A fundamental issue in dynamic choice under ambiguity is the fact that, for ambiguity sensitive choices, an updating rule cannot preserve both consequentialism and dynamic consistency at the same time \citep*{hanany2007updating,Siniscalchi2009-SINTOO-5}. 
	
	According to the definition in \cite{hanany2007updating}, consequentialism means that conditional preferences should not depend on an event's not occurring or the context of the decision problem (e.g. feasible acts, etc.). Dynamic consistency (which is weaker than the DC axiom in this paper) means that the ex-ante most preferred act should remain optimal to any other acts that are both feasible and agree with it on the event's not occurring after updating. Any updating rule for ambiguous beliefs without other restrictions needs to relax either one of these two properties. 
	
	RML takes the consequentialist approach by requiring consequentialism and relaxing dynamic consistency. The essential implication of consequentialism is that the DM should update her belief in the same way regardless of the decision problem at hand. In other words, the updated belief is given by a function of only the ex-ante belief and the conditioning event. 
	
	As a result, dynamic consistency will sometimes be violated under RML. On the one hand, the violation of dynamic consistency under ambiguity is commonly observed in experiments, e.g. \cite{DOMINIAK2012625} and \cite{GEORGALOS202128}. On the other hand, consistent planning is proposed in the literature as a way to deal with this problem of consequentialist updating rules. With consistent planning, the DM is assumed to be sophisticated such that she is able to anticipate her future decisions and chooses an ex-ante optimal plan accordingly. A behavioral characterization of consistent planning is given by \cite{Siniscalchi2011}. 
	
	Along another route, \cite{hanany2007updating, hanany2009updating} axiomatize updating rules that preserve dynamic consistency yet do not require consequentialism. More specifically, the dynamic consistent updating rules for multiple priors characterized in \cite{hanany2007updating} explicitly depend on the feasible set of acts as well as the optimal acts. Therefore, different decision problems result in different updated beliefs according to this rule. 
	
	Instead of relaxing one of these two properties, \cite{EPSTEIN20031} characterize the rectangularity condition for the set of priors such that dynamic consistency is preserved under FB updating, i.e. consequentialism also holds. Namely, dynamic consistency and consequentialism can both be satisfied when the ex-ante belief takes a specific form. The rectangularity condition, however, imposes a restriction on the possible conditioning events to which FB updating is applicable. 
	
	Another way of keeping both consequentialism and dynamic consistency is to relax reduction of compound evaluations\footnote{Or the law of iterated expectation in \cite{GUL2021105129}'s terminology}, pursued by \cite{Li2020} and \cite{GUL2021105129}. According to this approach, the DM's ex-ante evaluation of an act will also depend on the temporal resolution of uncertainty.
	
	\section{Concluding Remarks}\label{conclude}
	On maxmin (minimax) decision making, in a well-known comment \cite{Good1952} says: ``In what circumstances is a minimax solution reasonable? I suggest that it is reasonable if and only if the least favorable initial distribution is reasonable according to your body of beliefs.'' 
	
	The updating procedure proposed at the beginning of this paper echoes this idea. A prior is going to be updated and thus used to evaluate decisions only if the DM still finds it reasonable upon observing the realized information. 
	
	In this sense, Bayesian updating of a single prior belief is a special case where the initial belief is never revised. In the case where the true probability law governing the uncertainty is known, Bayesian updating is reasonable because one cannot further refine the belief but can only update it conditional on the information received. 
	
	However, in scenarios where the underlying probability law is unknown, the DM needs to form a conjecture about the uncertainty for decision making. Whether the conjecture is a singleton or a set of probabilities, it seems too stringent to require the DM always to stick with her initial conjecture despite new information she might receive. Thus, applying Bayes' rule to the conjecture belief actually reflects confidence about her initial belief. Accordingly, updating rules that do not reflect such confidence and allow for revising the initial belief might also be reasonable. 
	
	Several different updating rules have been proposed in the literature to capture the situation in which initial conjecture is a singleton and may be revised after seeing new information. For when the initial beliefs are multiple priors, this paper proposes the RML updating rule, in which the initial beliefs are revised based on the likelihood of the information observed. More importantly, this paper pinpoints the behaviors that are equivalent to such an updating rule, providing a preference foundation for using likelihood as a criterion in updating. 
	
	This paper also uses RML to address applications involving ambiguity in information design and mechanism design. Many existing results in these areas are derived based on the specific assumption of FB. I have shown RML to be useful for identifying the extent of deviation from FB allowed for  those results to still hold. Last but not least, I illustrate through an example that adopting RML instead of FB as the model of updating helps robustify results from the literature while also maintaining tractability. 
	
	\newpage
	\begin{appendices}
		\counterwithin{table}{section}
		
		\section{Proofs of Results}\label{ap_pf}
		Throughout all the proofs, let $C$ denote the convex and closed set of priors representing the ex-ante preference; let $C^{*}(E)$ denote the subset of $C$ attaining the maximum likelihood of the event $E$: $C^{*}(E) \equiv \{p \in C: p(E) \geq p'(E) ~\forall p' \in C\}$ and let $p^{*}(E)$ denote the maximal probability of event $E$ given $C$: $p^{*}(E) \equiv \max_{p \in C}p(E)$. 
		
		\subsection{Sufficiently Good Consequences}
		Since the characterization results rely largely on the threshold for sufficiently good consequences, I will first derive some useful results here to simplify notations in the proof. First, consider the following lemma. 
		\begin{lemma}\label{lem}
			For all strict $\succsim$-nonnull $E\in \Sigma$, for all $f\in \mathcal{F}$. $\bar{x}_{E,f}$ is a threshold for sufficiently good consequences if and only if 
			\begin{equation*}
				\min\limits_{p \in C} \int_{\Omega} u(f_{E}\bar{x}_{E,f})dp = \min\limits_{p \in C^{*}(E)}  \int_{\Omega} u(f_{E}\bar{x}_{E,f})dp
			\end{equation*}
			Moreover, any $x^{*} \succsim \bar{x}_{E,f}$ is also such a threshold. 
		\end{lemma}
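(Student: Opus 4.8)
The plan is to restate the displayed equality as a statement about \emph{where the minimizing prior of} $f_{E}\bar{x}_{E,f}$ \emph{sits}. Since $C^{*}(E)\subseteq C$, the identity $\min_{p\in C}\int_{\Omega}u(f_{E}\bar{x}_{E,f})\,dp=\min_{p\in C^{*}(E)}\int_{\Omega}u(f_{E}\bar{x}_{E,f})\,dp$ is equivalent to the assertion that some minimizer of $f_{E}\bar{x}_{E,f}$ over $C$ already lies in $C^{*}(E)$. Write $M(g)\equiv\arg\min_{p\in C}\int_{\Omega}u(g)\,dp$ for the nonempty, compact, convex set of minimizing priors of an act $g$. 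The workhorse is the standard MEU identity: for $\lambda\in(0,1)$ and any acts $a,b$ one has $U(\lambda a+(1-\lambda)b)\geq\lambda U(a)+(1-\lambda)U(b)$, with equality \emph{if and only if} $M(a)\cap M(b)\neq\emptyset$ (a common minimizer lets the two minimizations be carried out simultaneously; conversely equality forces the minimizer of the mixture to be optimal for each piece). I would also record the benchmark computations used throughout: because $x\precsim x^{*}$ and $y\prec z$, both $x_{E}x^{*}$ and $y_{E}z$ pay weakly, respectively strictly, less on $E$ than on $E^{c}$, so each is minimized by placing maximal mass on $E$; hence $M(y_{E}z)=C^{*}(E)$ and $M(x_{E}x^{*})\supseteq C^{*}(E)$.

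Next I would prove the \textbf{monotonicity} underlying the ``moreover'' clause: that $M(f_{E}x^{*})\cap C^{*}(E)\neq\emptyset$ is preserved as $x^{*}$ improves. Writing $c=u(x^{*})$ and minimizing $\int_{E}u(f)\,dp+c(1-p(E))$ over $C$, a two-point optimality exchange between a minimizer $p_{0}$ at level $c_{0}$ and a minimizer $p_{1}$ at $c_{1}>c_{0}$ yields $(c_{1}-c_{0})(p_{1}(E)-p_{0}(E))\geq 0$, so $p(E)$ is nondecreasing in $c$ along minimizers. Since $C^{*}(E)$ maximizes $p(E)$ at the value $p^{*}(E)$, once a minimizer attains $p^{*}(E)$ at $c_{0}$ every minimizer at larger $c$ also attains it. Thus the displayed equality holding at $\bar{x}_{E,f}$ forces it to hold at every $x^{*}\succsim\bar{x}_{E,f}$, which (given the equivalence below) delivers the ``moreover'' statement at once.

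For the forward implication (equality $\Rightarrow$ threshold) the argument is then clean. If $M(f_{E}\bar{x}_{E,f})\cap C^{*}(E)\neq\emptyset$, monotonicity gives $M(f_{E}x^{*})\cap C^{*}(E)\neq\emptyset$ for every $x^{*}\succsim\bar{x}_{E,f}$. Combined with $M(y_{E}z)=C^{*}(E)$ and $M(x_{E}x^{*})\supseteq C^{*}(E)$, the shared minimizer in $C^{*}(E)$ lets both mixtures in the biconditional split additively, so that $U(\lambda f_{E}x^{*}+(1-\lambda)y_{E}z)=\lambda U(f_{E}x^{*})+(1-\lambda)U(y_{E}z)$ and likewise with $x_{E}x^{*}$. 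Subtracting (for $\lambda\in(0,1]$) collapses the mixed indifference to $U(f_{E}x^{*})=U(x_{E}x^{*})$, i.e.\ to the unmixed indifference, so the equivalence holds for all admissible $\lambda,y,z,x,x^{*}$, with the endpoints $\lambda\in\{0,1\}$ immediate.

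The reverse implication is where the work lies, and I expect it to be the main obstacle. Arguing by contraposition, suppose the displayed equality fails, so every minimizer of $f_{E}\bar{x}_{E,f}$ has $p(E)<p^{*}(E)$ and the $f$-side mixture is \emph{strictly} superadditive. I would then exhibit admissible parameters violating the biconditional: choose a sufficiently good $x^{*}$ that is still below the critical level (so $M(f_{E}x^{*})\cap C^{*}(E)=\emptyset$ persists by monotonicity), take $x$ to be the value equalizing $f_{E}x^{*}\sim x_{E}x^{*}$, and pick any $y\prec z$; strict superadditivity on the $f$-side against additive splitting on the $x$-side then breaks the mixed indifference while the unmixed one holds. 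The delicate point is guaranteeing this configuration is \emph{admissible}, namely that the equalizing $x$ can be taken with $x\precsim\bar{x}_{E,f}$ and that a good-but-subcritical $x^{*}\succsim\bar{x}_{E,f}$ exists at all; this is precisely where Assumption~\ref{asum} enters, with unboundedness of $u(X)$ pinning down the conditional certainty equivalent and finiteness of the extreme points of $C$ making the critical level well defined and the minimizer selection piecewise linear in $c$.
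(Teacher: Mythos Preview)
Your sufficiency argument and the monotonicity step match the paper's (the paper is terser and omits the monotonicity computation entirely). For necessity you take a slightly different and cleaner line than the paper: rather than using unboundedness to pick special $y\prec z$ and $\lambda$ that push the minimizer of the $f$-mixture into $C^{*}(E)$, you note that $M(f_{E}\bar{x}_{E,f})\cap M(y_{E}z)=\emptyset$ already forces strict superadditivity for every $\lambda\in(0,1)$ and every $y\prec z$. The detour through a ``subcritical but still $\succsim\bar{x}_{E,f}$'' choice of $x^{*}$ is unnecessary; $x^{*}=\bar{x}_{E,f}$ itself has $M(f_{E}x^{*})\cap C^{*}(E)=\emptyset$ by hypothesis.

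The genuine gap is the admissibility constraint you flag, and your appeal to Assumption~\ref{asum} does not close it. Consider $C$ with two extreme points satisfying $p_{1}(E)>p_{2}(E)$, an act $f$ constant on $E$ with value $c$, and any candidate $\bar{x}_{E,f}$ with $u(\bar{x}_{E,f})<c$. Then $f_{E}\bar{x}_{E,f}$ is minimized at $p_{2}\notin C^{*}(E)$, so the displayed equality fails; yet for every admissible pair $x\precsim\bar{x}_{E,f}\precsim x^{*}$ one has $U(f_{E}x^{*})>U(x_{E}x^{*})$ strictly, and this strict gap propagates to all the mixtures (additive splitting on the $x$-side, concavity on the $f$-side), so both sides of the biconditional are false for every $\lambda\in(0,1]$ and the threshold condition holds vacuously. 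Neither unboundedness of $u(X)$ above nor finiteness of extreme points blocks this configuration. The paper's one-line necessity argument does not address this case either, so the lacuna is shared; but your claim that Assumption~\ref{asum} handles it is not correct.
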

	
		\begin{proof}[Proof of Lemma \ref{lem}]
		\textbf{Sufficiency.} Fix any such $f_{E}\bar{x}_{E,f}$ and find $x$ such that 
		$f_{E}\bar{x}_{E,f} \sim x_{E}\bar{x}_{E,f}$. For any $y \prec z$ and $\lambda \in [0,1]$, by concavity one has, 
		\begin{equation}\label{convex}
		\min\limits_{p \in C} \int_{\Omega} \lambda u( f_{E}\bar{x}_{E,f}) dp  + \min\limits_{p \in C} \int_{\Omega} (1-\lambda) u(y_{E}z)dp \leq	\min\limits_{p \in C} \int_{\Omega} u(\lambda f_{E}\bar{x}_{E,f} + (1-\lambda) y_{E}z)dp
		\end{equation}
		By presumption, the LHS can be obtained by a unique $p \in C^{*}(E)$, therefore the equality holds. Notice that the same also holds for $\lambda x_{E}\bar{x}_{E,f} + (1-\lambda) y_{E}z$, thus the conclusion. 
		
		\textbf{Necessity.} This direction is proved by the contrapositive statement. If 
			\begin{equation*}
		\min\limits_{p \in C} \int_{\Omega} u(f_{E}\bar{x}_{E,f})dp \neq  \min\limits_{p \in C^{*}(E)}  \int_{\Omega} u(f_{E}\bar{x}_{E,f})dp
	\end{equation*}
		then as $u(X)$ is unbounded from above, one can always find $y \prec z$ and $\lambda \in [0,1]$ such that the RHS of \eqref{convex} is obtained at some $p \in C^{*}(E)$, and hence strict inequality holds. 
		\end{proof}
	
		The second statement suggests that for all events $E_{1}, E_{2}$ and acts $f,g$, one can always find the consequence $x^{*} \equiv \max \{\bar{x}_{E_{1},f}, \bar{x}_{E_{1},g}, \bar{x}_{E_{2},f}, \bar{x}_{E_{2},g}\}$ such that is a threshold for any combination of event and act. As a result, all the axioms can be equivalently stated in terms of some $x^{*}$ that is sufficiently good. Therefore, throughout all the remaining proofs, I will use $x^{*}$ to denote such sufficiently good consequence as it should cause no confusion. 
		
		Next, a threshold $\bar{x}_{E,f}$ always exists given Assumption \ref{asum}. 
		\begin{lemma}\label{lem4}
			If $\succsim$ admits MEU representation with $C$ having finitely many extreme points and $u(X)$ being unbounded from above, then for all strict $\succsim$-nonnull $E\in \Sigma$ and $f\in \mathcal{F}$, the threshold $\bar{x}_{E,f} \in X$ with $u(\bar{x}_{E,f})$ finite always exists. 
		\end{lemma}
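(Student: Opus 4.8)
The plan is to reduce the existence claim to the equality in Lemma \ref{lem} and then to a finite linear-programming comparison over the extreme points of $C$. By Lemma \ref{lem}, it suffices to exhibit a consequence $\bar{x}_{E,f}$ with $u(\bar{x}_{E,f})$ finite such that
\[
	\min_{p \in C} \int_\Omega u(f_E \bar{x}_{E,f})\,dp = \min_{p \in C^*(E)} \int_\Omega u(f_E \bar{x}_{E,f})\,dp .
\]
First I would rewrite the objective for a generic complement value $t := u(x)$. Since $f_E x$ pays $f$ on $E$ and the constant $x$ on $E^c$,
\[
	\int_\Omega u(f_E x)\,dp = \int_E u(f)\,dp + t\bigl(1 - p(E)\bigr) =: V(p,t),
\]
which is affine in $p$ for each fixed $t$. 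Because $C$ has finitely many extreme points $q_1,\dots,q_n$, it is their convex hull, so the minimum of the affine functional $V(\cdot,t)$ over $C$ equals $\min_i V(q_i,t)$; likewise $p^*(E) = \max_{p\in C} p(E)$ is attained, and $C^*(E)$ is the face of $C$ whose extreme points are exactly those $q_i$ with $q_i(E) = p^*(E)$.

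Next I would carry out the key separation between extreme points inside and outside $C^*(E)$. Let $q_{i^*} \in C^*(E)$ be an extreme point of $C^*(E)$ minimizing $\int_E u(f)\,dq$, so that $\min_{p\in C^*(E)} V(p,t) = V(q_{i^*},t)$. For any extreme point $q_j \notin C^*(E)$,
\[
	V(q_j,t) - V(q_{i^*},t) = \Bigl(\int_E u(f)\,dq_j - \int_E u(f)\,dq_{i^*}\Bigr) + t\bigl(p^*(E) - q_j(E)\bigr).
\]
Since $f$ is simple and all its utility values lie in $(\underline{u},\infty)$ and are finite, the first bracketed term is bounded in absolute value by a constant $M$ independent of $t$. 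Setting $\delta := \min\{\,p^*(E) - q_j(E) : q_j \notin C^*(E)\,\}$, the difference is at least $-M + t\delta$, so $V(q_j,t) > V(q_{i^*},t)$ for every such $j$ whenever $t > M/\delta$. Hence for such $t$ the minimizer of $V(\cdot,t)$ over all extreme points — and therefore over all of $C$ — lies in $C^*(E)$, giving $\min_{p\in C} V(p,t) = V(q_{i^*},t) = \min_{p\in C^*(E)} V(p,t)$.

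Finally, since $u(X) = (\underline{u},\infty)$ is unbounded from above, I can choose $x$ with $u(x) = t$ finite and $t > M/\delta$, and set $\bar{x}_{E,f} := x$; the displayed equality then holds, and Lemma \ref{lem} identifies $\bar{x}_{E,f}$ as a threshold for sufficiently good consequences with $u(\bar{x}_{E,f})$ finite. I expect the main obstacle to be precisely this uniform separation: one must rule out priors lying just outside $C^*(E)$ whose likelihood approaches $p^*(E)$, which would force $t \to \infty$ and preclude a finite threshold. Finiteness of the extreme-point set is exactly what closes this gap, since it makes $\delta$ a minimum over a finite set and hence strictly positive; this is where Assumption \ref{asum} is used in an essential way.
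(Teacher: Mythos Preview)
Your proposal is correct and follows essentially the same approach as the paper's proof: both reduce the MEU minimum over $C$ to a minimum over the finitely many extreme points, identify the minimizer $q_{i^*}$ within $C^*(E)$, and then show that for extreme points outside $C^*(E)$ the strictly positive gap $p^*(E)-q_j(E)$ allows one to pick a finite $t$ making $V(q_j,t)\geq V(q_{i^*},t)$. The paper solves the threshold inequality separately for each such $q_j$ and then takes the maximum over the finite collection (also taking the max with the conditional certainty equivalent $x\sim_E f$), whereas you define a single uniform $\delta$ and $M$; these are cosmetic differences. The only small omission is the degenerate case in which every extreme point lies in $C^*(E)$, where your $\delta$ is a minimum over an empty set; but then $C=C^*(E)$ and the equality holds for any $t$, so the claim is trivial.
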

		
		\begin{proof}[Proof of Lemma \ref{lem4}]
			
			First show that when $C$ contains only finitely many extreme points, an $\bar{x}_{E,f}$ such that $f_{E}\bar{x}_{E,f}$ is evaluated at some extreme point in $C^{*}(E)$ always exists. 
			
			Let $q$ be any extreme point of $C^{*}(E)$ and let $p$ be any extreme point of $C$. The act $f_{E}\bar{x}_{E,f}$ is evaluated at $q$ if for all extreme points $p$ of $C$, 
			\begin{equation}\label{eq1}
				\int_{E} u(f) dq + (1-q(E)) u(\bar{x}_{E,f}) \leq \int_{E} u(f) dp + (1-p(E)) u(\bar{x}_{E,f})
			\end{equation}
			Notice that the first term of both LHS and RHS does not depend on $\bar{x}_{E,f}$, furthermore, $(1-q(E)) \leq (1-p(E))$ as $q \in C^{*}(E)$. When $p$ is also in $C^{*}(E)$, the value of $\bar{x}_{E,f}$ does not matter and one can pin down the $q \in C^{*}(E)$ such that minimizes the evaluation of $f_{E}\bar{x}_{E,f}$ among all extreme points in $C^{*}(E)$. Denote the minimizing $q$ by $q^{*}$. 
			
			Fix $q^{*}$, for an extreme point $p$ not in $C^{*}(E)$, the following inequality becomes strict: $(1-q^{*}(E)) < (1-p(E))$. Then for all $E,f$ and such an extreme point $p$, inequality (\ref{eq1}) holds if 
			\begin{equation*}
				u(\bar{x}_{E,f,p}) \geq \frac{\int_{E} u(f) dq^{*} -  \int_{E} u(f) dp}{q^*{(E) - p(E)}}
			\end{equation*}
			Since $u(X)$ is unbounded from above, such an $\bar{x}_{E,f,p}$ always exist. Then it suffices to let $\bar{x}_{E,f} = \max \{ \max_{p} \bar{x}_{E,f, p}, x\}$  for $x \sim_{E} f$ where both ``max'' are according to the ex-ante preference. Since there are only finitely many extreme points $p$ in $C$, the maximum over a finite set always exists. 
		\end{proof}		
		
		\subsection{Proof of Theorem \ref{thm1}}
		\textbf{Necessity.} Suppose the conditional preferences are given by ML updating. For any strict $\succsim$-nonnull $E\in \Sigma$, fix any $f\in \mathcal{F}$ and $x\in X$ such that $f\sim_{E}x$. Then 
		\begin{equation*}
		\min\limits_{p \in C^{*}(E)} \int_{E} u(f)\frac{dp}{p(E)}= u(x)
		\end{equation*}
		and one can further derive
		\begin{align*}
			\min\limits_{p \in C} \int_{\Omega} u(f_{E}x^{*}) dp &= \min\limits_{p \in C^{*}(E)} \int_{\Omega} u(f_{E}x^{*}) dp \\
			&= \min\limits_{p\in C^{*}(E)}  \left[ \int_{E}  u(f_{E}x^{*}) \frac{dp}{p(E)} \cdot p(E) + (1-p(E))u(x^{*}) \right] \\
			& = p^{*}(E)\cdot \min\limits_{p\in C^{*}(E)} \int_{E} u(f)\frac{dp}{p^{*}(E)} +  (1-p^{*}(E))u(x^{*}) \\
			&= p^{*}(E) u(x) + (1-p^{*}(E))u(x^{*}) = \min\limits_{p \in C} \int_{\Omega}u(x_{E}x^{*})dp
		\end{align*}
		where the third equality follows from $p(E) = p^{*}(E)$ for all $p \in C^{*}(E)$, the last equality follows from the fact that $x^{*} \succsim x$. 
		
		\textbf{Sufficiency.} For the sufficiency of CR-S, fix any strict $\succsim$-nonnull $E\in \Sigma$, consider the contra positive statement: not ML updating implies not CR-S. 
		
		Let $C_{E}$ be the closed and convex set of posteriors representing the conditional preference $\succsim_{E}$. Not ML updating implies that $C_{E} \neq \{p/p(E): p \in C^{*}(E)\}$. In other words, either there exists $\tilde{p} \in C^{*}(E)$ such that $\tilde{p}/\tilde{p}(E) \notin C_{E}$, or there exists $q \in C_{E}$ such that $q \notin \{p/p(E): p \in C^{*}(E)\}$ or both. 
		
		For the two different cases of not ML, since both $C_{E}$ and $\{p/p(E): p \in C^{*}(E)\}$ are convex and closed set, the same type of separating hyperplane argument can be applied to both cases. Thus the proof here only shows the implication of the first case, while the same argument applies to the other one. 
		
		Formally, in the first case, there exists $\tilde{p} \in C^{*}(E)$ such that $\tilde{p}/\tilde{p}(E) \notin C_{E}$. As $\{\tilde{p}/\tilde{p}(E)\}$ is compact and $C_{E}$ is convex and closed, the strict separating hyperplane theorem implies that there exists an act $f\in \mathcal{F}$ such that
		\begin{equation*}
			\min\limits_{p\in C^{*}(E)} \int_{E} u(f) \frac{dp}{p(E)} \leq  \int_{E} u(f) \frac{d\tilde{p}}{\tilde{p}(E)} < \min\limits_{p\in C_{E}} \int_{E} u(f)dp = u(x)
		\end{equation*}
		
		Then it further implies
		\begin{equation*}
			\begin{split}
				\min\limits_{p \in C} \int_{\Omega} u(f_{E}x^{*}) dp & = \min\limits_{p \in C^{*}(E)}  \left[ \int_{E} u(f) \frac{dp}{p(E)} \cdot p(E) + u(x^{*})(1-p(E)) \right]\\
				& =p^{*}(E) \min\limits_{p \in C^{*}(E)} \int_{E} u(f) \frac{dp}{p(E)} +  u(x^{*})(1-p^{*}(E))\\
				& < p^{*}(E) \cdot u(x) + u(x^{*})(1-p^{*}(E)) = \min\limits_{p \in C} \int_{\Omega} u(x_{E}x^{*}) dp
			\end{split}
		\end{equation*}
		Therefore, CR-S is not true. The proof of the other case is analogously the same.\qed
		
		\subsection{Proof of Theorem \ref{thm2}}
		
		The necessity of CR-UO is given in the main text. For DC-CS, reversing the arguments in step 4 of proving sufficiency shows that equation (\ref{equ3}) is necessary under Contingent RML. It then immediately implies that DC-CS needs to be true. 
		
		For the sufficiency, fix any strict $\succsim$-nonnull $E\in \Sigma$, the proof proceeds by the following steps: 
		
		\underline{Step 1.} When CR-UO is true, the following inequalities holds: $f_{E}x^{*} \succsim x_{E}x^{*} \succsim x \succsim f_{E}x$, which further implies that there always exists an $\alpha[E,f] \in [0,1]$ such that: (see Figure \ref{fig1}) 
		\begin{equation}\label{equ1}
			\alpha[E,f] U(f_{E}x^{*}) + (1-\alpha[E,f]) U(f_{E}x) = \alpha[E,f] U(x_{E}x^{*}) +  (1-\alpha[E,f]) U(x) 
		\end{equation}
		$\alpha[E,f]$ is independent of the value of $x^{*}$ as $f_{E}x^{*}$ is always evaluated at extreme points in $C^{*}(E)$. Then both $U(f_{E}x^{*})$ and $U(x_{E}x^{*})$ have the common term $u(x^{*})(1- p^{*}(E))$ which cancels out. Furthermore, it is easy to see that this $\alpha[E,f]$ is unique if either $f_{E}x \prec x$ or $f_{E}x^{*} \succ x_{E}x^{*}$ hold.
		
		\underline{Step 2. } Construction of acts satisfying the premises of DC-CS. 
		
		\textbf{Equivalence Class.} First, for any two acts $f, f'$, denote them by $f \equiv_{E} f'$ if $f(\omega) = f'(\omega)$ for all $\omega \in E$. Then an equivalence class of acts can be accordingly defined: 
		\begin{equation*}
			[f] = \{ f' \in \mathcal{F}: f' \equiv_{E} f \}
		\end{equation*}
		By step 1, $\alpha [E, f] = \alpha[E, f']$ whenever $f' \in [f]$. Hereafter, I use $f$ to denote the whole class of acts $[f]$, as it should cause no confusion.  \\
		
		\textbf{When $\alpha[E]$ is not unique.} If there does not exist any $f\in \mathcal{F}$ with $\alpha[E, f]$ being unique. Then by step 1, for all $f\in \mathcal{F}$ and $x\in X$ with $f \sim_{E}x$, it implies that both $f_{E}x \sim x$ and $f_{E}x^{*} \sim x_{E}x^{*}$ hold. 
		
		Then it is the case in which both CR-C and CR-S hold at the same time. Namely, the conditional preference $\succsim_{E}$ can be represented by both FB and ML. As $E$ is strict $\succsim$-nonnull, it implies that $C = C^{*}(E)$. In other words, $C \neq C^{*}(E)$ implies there exists at least an $f\in \mathcal{F}$ such that $\alpha[E,f]$ is unique. \\
		
		When unique $\alpha[E,f]$ exists, fix some $f\in \mathcal{F}$ such that $\alpha[E, f]$ is unique.
		
		\textbf{$\alpha[E,f]$ is unchanged under mixture with constant acts.} For any $y \in X$, for any $\lambda \in (0,1]$ consider the act $f_{\lambda}y = \lambda f + (1-\lambda)y $, which is an Anscombe-Aumann mixture of acts.\footnote{The case $\lambda = 0$ is excluded since when $\lambda = 0$, $f_{\lambda}y$ coincides with $y$, and then $\alpha[E, y]$ is not unique.} By certainty independence, $f\sim_{E}x$ implies that $f_{\lambda} y \sim_{E}  \lambda x + (1-\lambda)y$. Let $x_{\lambda}y\in X$ denote the consequence indifferent to $\lambda x + (1-\lambda)y$. Then for all $x^{*}$ that is sufficiently good for $f$ and $f_{\lambda}y$, equation (\ref{equ1}) for $f_{\lambda}y$ implies that 
		\begin{equation}\label{equ6}
			\alpha[E, f_{\lambda}y] U([f_{\lambda}y]_{E}x^{*}) + (1-\alpha[E, f_{\lambda}y]) U([f_{\lambda}y]_{E}x_{\lambda}y) =  \alpha[E, f_{\lambda}y] U([x_{\lambda}y]_{E}x^{*}) + (1-\alpha[E, f_{\lambda}y]) U(x_{\lambda}y) 
		\end{equation}
		Moreover, $\alpha[E, f_{\lambda}y] $ is unique. Notice that 
		\begin{align*}
			U([f_{\lambda}y]_{E}x_{\lambda}y) = U([f_{E}x]_{\lambda}y)=  \lambda U(f_{E}x) + (1-\lambda)U(y)
		\end{align*}
		where the first equality follows from Anscombe-Aumann mixture, the second equality follows from certainty independence. Then one can  write equation (\ref{equ6}) as
		\begin{align*}
			&\alpha[E, f_{\lambda}y] \min\limits_{p\in C^{*}(E)} \int_{E} u(f_{\lambda}y)dp  + (1-\alpha[E, f_{\lambda}y]) [\lambda U(f_{E}x) + (1-\lambda)U(y)] \\
			&=\alpha[E, f_{\lambda}y] u(x_{\lambda}y) p^{*}(E) +  (1-\alpha[E, f_{\lambda}y]) [\lambda U(x) + (1-\lambda)U(y)] 
		\end{align*}
		which further implies that 
		\begin{equation*}
			\alpha[E, f_{\lambda}y] U(f_{E}x^{*}) + (1-\alpha[E, f_{\lambda}y]) U(f_{E}x)  = \alpha[E, f_{\lambda}y] U(x_{E}x^{*}) + (1-\alpha[E, f_{\lambda}y]) U(x)
		\end{equation*}
		As $f\sim_{E}x$ one also has
		\begin{equation*}
			\alpha[E, f] U(f_{E}x^{*}) + (1-\alpha[E, f]) U(f_{E}x)= \alpha[E, f] U(x_{E}x^{*}) + (1-\alpha[E, f]) U(x) 
		\end{equation*}
		Since $\alpha[E,f]$ is unique, it has to be the case that $\alpha[E,f_{\lambda}y] = \alpha[E,f]$ for all $\lambda\in (0,1]$ and $y\in X$. \\
		
		\textbf{Construction of acts satisfying premises of DC-CS.} Recall that the utility function $u(\cdot)$ is normalized such that $u(X) = (\underline{u}, \infty)$ for some $\underline{u} \in \R_{<0}\cup \{-\infty\}$. 
		
		For the fixed $f\in \mathcal{F}$, for any $\epsilon > 0$, there always exists an act $f_{\epsilon}$ given by taking mixtures between $f$ and some constant act such that $u(f_{\epsilon}(\omega)) \in [0, \epsilon]$ for all $\omega \in \Omega$. By previous argument, one has $\alpha[E,f_{\epsilon}] = \alpha[E,f]$. 
		
		Take any act $g\in \mathcal{F}$ that cannot be obtained from $f$ by taking mixtures with constant acts. (If can, then $\alpha[E,g] = \alpha[E,f]$) For any $\lambda \in (0,1]$ and $y \in X$, consider the act $g_{\lambda}y$, i.e. the $\lambda$ mixture between $g$ and $y$. 
		
		For the two premises of DC-CS\footnote{Notice here $f_{\epsilon}$ takes the role of $f$ and $g_{\lambda}y$ takes the role of $g$ in the statement of this axiom.}, it suffices to find $\lambda \in (0,1]$ and $y\in X$ such that the following two equations hold: 
		\begin{equation}\label{dc1}
			U(f_{\epsilon E}x) = U([g_{\lambda}y]_{E}x) 
		\end{equation}
		for $x \sim_{E} f_{\epsilon}$  and 
		\begin{equation}\label{dc2}
			U(f_{\epsilon E}x^{*}) = U([g_{\lambda}y]_{E}x^{*}) 
		\end{equation}
		for some $x^{*}$ that is sufficiently good for $f_{\epsilon}$ and $g_{\lambda}y$. The sufficiently good $x^{*}$ can be chosen after $f_{\epsilon}$ and $g_{\lambda}y$ are pinned down such that both $U(f_{\epsilon E}x^{*})$ and $U([g_{\lambda}y]_{E}x^{*})$ will both be evaluated at extreme points in $C^{*}(E)$. 
		
		Consider the extreme points of the set of posteriors of $C^{*}(E)$, let $q_{f}$ and $q_{g}$ denote the two of them evaluating $f_{\epsilon}$ and $g_{\lambda}y$ respectively, then from equation (\ref{dc2}) one can derive
		\begin{align*}
			U(f_{\epsilon E}x^{*}) & = U([g_{\lambda}y]_{E}x^{*})  \\
			\min\limits_{p \in C^{*}(E)} \int_{\Omega} u(f_{\epsilon E}x^{*}) dp & = \min\limits_{p \in C^{*}(E)} \int_{\Omega} u([g_{\lambda}y]_{E}x^{*}) dp  \\
			\min\limits_{p \in C^{*}(E)} \int_{E} u(f) dp + u(x^{*})(1-p^{*}(E)) & =  \min\limits_{p \in C^{*}(E)} \int_{E} u(g_{\lambda}y) dp + u(x^{*})(1-p^{*}(E)) \\
			\min\limits_{p \in C^{*}(E)} \int_{E} u(f) dp & =  \min\limits_{p \in C^{*}(E)} \int_{E} u(g_{\lambda}y) dp  \\
			\min\limits_{p \in C^{*}(E)} \int_{E} u(f) \frac{dp}{p^{*}(E)} &= \min\limits_{p \in C^{*}(E)} \int_{E} u(g_{\lambda}y) \frac{dp}{p^{*}(E)}
		\end{align*}
		i.e. 
		\begin{equation*}
			u(f_{\epsilon})\cdot q_{f}= \lambda u(g)\cdot q_{g} + (1-\lambda)u(y)
		\end{equation*}
		Thus, for each $\lambda \in(0,1]$, the constant act $y$ could be pinned down by letting 
		\begin{equation}\label{equ2}
			(1-\lambda)u(y) = u(f_{\epsilon})\cdot q_{f} - \lambda u(g)\cdot q_{g}
		\end{equation}
		and $y$ is arbitrary if $\lambda = 1$. 
		
		Next consider equation (\ref{dc1}): 
		\begin{equation*}
			\min\limits_{p\in C} \int_{\Omega} u(f_{\epsilon E}x) dp = \min\limits_{p\in C} \left\{ \lambda \int_{E} u(g) dp + (1-\lambda) u(y)p(E)  + (1-p(E))u(x) \right\}
		\end{equation*}
		
		Consider the RHS of this equation and plug into $(1-\lambda)u(y)$ from equation (\ref{equ2}) yields
		\begin{align*}
			&\quad \min\limits_{p\in C} \left\{ \lambda \int_{E}  u(g) dp + [u(f_{\epsilon})\cdot q_{f} - \lambda u(g)\cdot q_{g}]p(E)  + (1-p(E))u(x) \right\} \\
			& = \min\limits_{p\in C} \left\{u(f_{\epsilon})\cdot q_{f} p(E) + (1-p(E))u(x)  + \lambda \left[\int_{E}  u(g) dp  - u(g)\cdot q_{g}p(E) \right]  \right\} 
		\end{align*}
		i.e. 
		\begin{equation}\label{equ7}
			\min\limits_{p\in C} \int_{\Omega} u(f_{\epsilon E}x) dp = \min\limits_{p\in C} \left\{u(f_{\epsilon})\cdot q_{f} p(E) + (1-p(E))u(x)  + \lambda \left[\int_{E}  u(g) dp  - u(g)\cdot q_{g}p(E) \right]  \right\} 
		\end{equation}
		Therefore, equation (\ref{dc1}) holds if there exists $\lambda \in (0,1]$ that solves equation (\ref{equ7}). Notice that the LHS of equation (\ref{equ7}) does not depend on $\lambda$, meanwhile the RHS is a continuous function of  $\lambda$, which I further denoted it by $R(\lambda)$. 
		
		Given continuity, the existence of a solution to equation (\ref{equ7}) can be proved by showing that $R(0) >$ LHS and $R(1) <$ LHS. 
		
		First consider $R(0)$: 
		\begin{align*}
			R(0) & = \min\limits_{p\in C} \left\{u(f_{\epsilon})\cdot q_{f} p(E) + (1-p(E))u(x)  \right\} \\
			& \geq  \min\limits_{p\in C} \left\{ u(x) p(E) + (1-p(E))u(x)\right\} \\
			& = u(x) \\
			& > \min\limits_{p\in C} \int_{\Omega} u(f_{\epsilon E}x) dp = \text{LHS} 
		\end{align*}
		where the second and forth inequality comes from $f_{\epsilon E}x^{*} \succsim x_{E}x^{*}$ and $x \succsim f_{\epsilon E}x$ respectively. The forth inequality is strict because of the fact that $\alpha[E,f]$ is unique, thus one of the inequalities has to be strict.
		
		Next consider $R(1)$: 
		\begin{align*}
			R(1) & = \min\limits_{p\in C} \left\{u(f_{\epsilon})\cdot q_{f} p(E) + (1-p(E))u(x)  + \int_{E}  u(g) dp  - u(g)\cdot q_{g}p(E)   \right\}\\
			&  \leq \epsilon + \min\limits_{p\in C} \left\{ \int_{E} u(g) dp  - u(g)\cdot q_{g}p(E)  \right\} \\
			& =  \epsilon +\min\limits_{p\in C} \left\{ p(E) \cdot \left[ \int_{E} u(g) \frac{dp}{p(E)}  - u(g)\cdot q_{g}\right]\right\} 
		\end{align*}
		where the inequality follows from $u(f_{\epsilon}) \leq \epsilon$.
		
		For the second term, its minimum is $0$ if it is the case $\min\limits_{p\in C} \int_{E} u(g) \frac{dp}{p(E)}  - u(g)\cdot q_{g} = 0$. In this case, notice that conditional evaluation of $g$ coincides under FB and ML updating. That is, both CR-C and CR-S holds for $g$, then $\alpha[E,g]$ is not unique. It suffices to let $\alpha[E,g] = \alpha[E,f]$. 
		
		On the other hand, if it is the case $\min\limits_{p\in C} \int_{E} u(g) \frac{dp}{p(E)}  - u(g)\cdot q_{g} < 0$, the minimum of the second term is negative. Then it suffices to find $\epsilon > 0$ such that 
		\begin{equation*}
			\epsilon < - \min\limits_{p\in C} \left\{ p(E) \cdot \left[ \int_{E} u(g) \frac{dp}{p(E)}  - u(g)\cdot q_{g}\right]\right\} 
		\end{equation*}
		Given this $\epsilon$, it further implies that
		\begin{equation*}
			R(1) < \epsilon  - \epsilon = 0 \leq \min\limits_{p\in C} \int_{\Omega} u(f_{\epsilon E}x) dp = \text{LHS} 
		\end{equation*}
		Therefore, the existence of $\lambda \in (0,1)$ that solves equation (\ref{equ7}) is guaranteed. 
		
		Finally, once $\lambda$ is solved, $u(y)$ is given by equation (\ref{equ2}): 
		\begin{equation*}
			u(y) = \frac{u(f_{\epsilon})\cdot q_{f} - \lambda u(g)\cdot q_{g}}{1-\lambda} \geq - \frac{\lambda}{1-\lambda} u(g)\cdot q_{g}
		\end{equation*}
		where the last inequality comes from $u(f_{\epsilon}(\omega)) \geq 0$. It remains to show that $u(y) > \underline{u}$ to guarantee the existence of this construction. It suffices to transform $g$  by taking mixtures with constant acts before the construction to get $u(g)\cdot q_{g} \leq 0$, then it would imply $u(y) \geq 0 > \underline{u}$ as desired.

		\underline{Step 3.} DC-CS implies $\alpha[E, f]$ to be a constant across all $f\in \mathcal{F}$, and it is unique if $C \neq C^{*}(E)$.
		
		In the following, I abuse notation to use $f$ to denote $f_{\epsilon}$ and $g$ to denote $g_{\lambda}y$ be the pair of acts constructed in the last step satisfying the two premises of DC-CS. 
		
		By step 1, $f\sim_{E} x$ implies that 
		\begin{equation}\label{equ11}
			\alpha[E,f] U(f_{E}x^{*}) + (1-\alpha[E,f])U(f_{E}x) =  \alpha[E,f] U(x_{E}x^{*}) + (1-\alpha[E,f]) U(x) 
		\end{equation}
		DC-CS implies $g \sim_{E} x$ and thus, 
		\begin{equation}\label{equ10}
			\alpha[E,g] U(g_{E}x^{*}) + (1-\alpha[E,g]) U(g_{E}x) = \alpha[E,g] U(x_{E}x^{*}) + (1-\alpha[E,g]) U(x) 
		\end{equation}
		Consider the LHS of equation (\ref{equ10}) and denote it by $L$: 
		\begin{align*}
			L & = \alpha[E, g] U(g_{E}x^{*}) + (1-\alpha[E, g])  U(g_{E}x)  \\
			& = \alpha[E, f] U(g_{E}x^{*}) + (1-\alpha[E, f]) U(g_{E}x)+ [\alpha[E, g] - \alpha[E, f]](U(g_{E}x^{*}) - U(g_{E}x))\\
			& \equiv L' +[\alpha[E, g] - \alpha[E, f]] M_{1} 
		\end{align*}
		Meanwhile the RHS of equation (\ref{equ10}) denoted by $R$ can be further derived as
		\begin{align*}
			R & = \alpha[E, g] U(x_{E}x^{*})  +  (1-\alpha[E, g]) U(x)\\
			& = \alpha[E, f] U(x_{E}x^{*})+ (1-\alpha[E, f]) U(x) + [\alpha[E, g] - \alpha[E, f]](U(x_{E}x^{*}) - U(x))\\
			& \equiv R' + [\alpha[E, g] - \alpha[E, f]] M_{2} 
		\end{align*}
		Notice that by equation (\ref{equ11}), $R'$ also equals to $(1-\alpha[E, f])U(f_{E}x) + \alpha[E, f] U(f_{E}x^{*})$. Thus $L' = R'$ as $f_{E}x \sim g_{E}x$ and $f_{E}x^{*} \sim g_{E}x^{*}$ hold. 
		
		Then the fact $L = R$ implies 
		\begin{align*}
			&L - R  =  L' +[\alpha[E, g] - \alpha[E, f]] M_{1} - R' - [\alpha[E, g] - \alpha[E, f]] M_{2} = 0 \\ 
			&\Rightarrow L' - R' =  [\alpha[E, f] - \alpha[E, g]] [M_{1} - M_{2}]
		\end{align*}
		Further notice that 
		\begin{align*}
			M_{1} - M_{2}& =  [U(g_{E}x^{*}) - U(g_{E}x)] - [ U(x_{E}x^{*}) - U(x)] \\
			& = \left[\min\limits_{p\in C} \int_{\Omega}u(g_{E}x^{*})dp - \min\limits_{p\in C} \int_{\Omega}u(g_{E}x)dp\right] - [u(x)p^{*}(E) + u(x^{*})(1-p^{*}(E)) - u(x)] \\
			& = \left[\min\limits_{p\in C^{*}(E)} \int_{\Omega}u(g_{E}x^{*})dp - \min\limits_{p\in C} \int_{\Omega}u(g_{E}x)dp\right] - [u(x^{*}) - u(x)](1-p^{*}(E)) \\
			& = \left[\min\limits_{p\in C^{*}(E)} \int_{\Omega}u(g_{E}x^{*})dp - \min\limits_{p\in C^{*}(E)} \int_{\Omega}u(g_{E}x)dp\right] - [u(x^{*}) - u(x)](1-p^{*}(E)) \\
			& + \left[\min\limits_{p\in C^{*}(E)} \int_{\Omega}u(g_{E}x)dp - \min\limits_{p\in C} \int_{\Omega}u(g_{E}x) dp\right] \\
			& = \min\limits_{p\in C^{*}(E)} \int_{\Omega}u(g_{E}x)dp - \min\limits_{p\in C} \int_{\Omega}u(g_{E}x)dp
		\end{align*}
		where the last equality follows from 
		\begin{align*}
			&\quad\min\limits_{p\in C^{*}(E)} \int_{\Omega}u(g_{E}x^{*})dp - \min\limits_{p\in C^{*}(E)} \int_{\Omega}u(g_{E}x) dp\\
			&= \left[\min\limits_{p\in C^{*}(E)} \int_{E}u(g)\frac{dp}{p(E)} - \min\limits_{p\in C^{*}(E)} \int_{E}u(g)\frac{dp}{p(E)}\right]p^{*}(E) + [u(x^{*}) - u(x)](1-p^{*}(E)) 
		\end{align*}
		
		Next, consider the following lemma: 
		\begin{lemma}\label{lem5}
			For any $f\in \mathcal{F}$ such that $\alpha[E, f]$ is unique, if $f\sim_{E}x$ then 
			\begin{equation*}
				\min\limits_{p\in C} \int_{\Omega} u(f_{E}x) dp < \min\limits_{p\in C^{*}(E)} \int_{\Omega} u(f_{E}x) dp
			\end{equation*}
		\end{lemma}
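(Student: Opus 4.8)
The plan is to prove the lemma by contradiction, exploiting the characterization of uniqueness of $\alpha[E,f]$ established in Step 1. Since $C^{*}(E)\subseteq C$, minimizing the \emph{same} integrand over the larger set can only lower the value, so $\min_{p\in C}\int_{\Omega}u(f_{E}x)dp\le \min_{p\in C^{*}(E)}\int_{\Omega}u(f_{E}x)dp$ holds automatically. The entire content of the lemma is that this inequality is \emph{strict}. I would therefore suppose, toward a contradiction, that equality holds --- i.e. that $f_{E}x$ is already evaluated under the ex-ante MEU criterion at some prior of $C^{*}(E)$ --- and show this forces $\alpha[E,f]$ to fail to be unique.

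The key is an identity obtained by writing each term of the defining equation (\ref{equ1}) in terms of $p^{*}(E)$ and $m^{*}\equiv\min_{p\in C^{*}(E)}\int_{E}u(f)dp$. Using that every $p\in C^{*}(E)$ has $p(E)=p^{*}(E)$, Lemma \ref{lem} guarantees $f_{E}x^{*}$ is evaluated on $C^{*}(E)$, so $U(f_{E}x^{*})=m^{*}+u(x^{*})(1-p^{*}(E))$; and since $x\precsim x^{*}$, the act $x_{E}x^{*}$ is also evaluated on $C^{*}(E)$, giving $U(x_{E}x^{*})=u(x)p^{*}(E)+u(x^{*})(1-p^{*}(E))$. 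Subtracting, the $u(x^{*})(1-p^{*}(E))$ terms cancel and the comparison collapses to the single scalar $U(f_{E}x^{*})-U(x_{E}x^{*})=m^{*}-u(x)p^{*}(E)$, which is exactly $\bigl(\min_{p\in C^{*}(E)}\int_{\Omega}u(f_{E}x)dp\bigr)-u(x)$.

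Now I would feed the contradiction hypothesis into this identity. If $\min_{p\in C}\int_{\Omega}u(f_{E}x)dp=\min_{p\in C^{*}(E)}\int_{\Omega}u(f_{E}x)dp$, then $U(f_{E}x)$ equals the latter, so the identity reads $U(f_{E}x)-U(x)=U(f_{E}x^{*})-U(x_{E}x^{*})$, i.e. $U(x)-U(f_{E}x)=-\bigl(U(f_{E}x^{*})-U(x_{E}x^{*})\bigr)$. But the Step 1 ordering, guaranteed by CR-UO, forces both $U(f_{E}x^{*})-U(x_{E}x^{*})\ge 0$ (overshooting) and $U(x)-U(f_{E}x)\ge 0$ (undershooting); a nonnegative quantity equal to the negative of another nonnegative quantity can hold only if both vanish. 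Hence $f_{E}x^{*}\sim x_{E}x^{*}$ and $f_{E}x\sim x$ hold simultaneously. By the Step 1 criterion, $\alpha[E,f]$ is unique precisely when at least one of these indifferences is strict, so having both non-strict contradicts uniqueness. Therefore equality is impossible and the strict inequality follows.

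The only delicate point, and the step I would verify most carefully, is the bookkeeping that both $f_{E}x^{*}$ and $x_{E}x^{*}$ are genuinely evaluated at priors of $C^{*}(E)$, so that the common term cancels and the difference reduces to $m^{*}-u(x)p^{*}(E)$. This is exactly where Lemma \ref{lem} (for $f_{E}x^{*}$) and the ordering $x\precsim x^{*}$ (for $x_{E}x^{*}$) are needed. Once that identity is in hand, the contradiction is immediate from the sign constraints supplied by CR-UO together with the Step 1 description of when $\alpha[E,f]$ fails to be unique.
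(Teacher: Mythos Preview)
Your proof is correct and follows essentially the same approach as the paper. The paper argues directly---from uniqueness of $\alpha[E,f]$ at least one of the CR-UO inequalities is strict, the first inequality (after adding $u(x)(1-p^{*}(E))$) gives $\min_{p\in C^{*}(E)}\int_{\Omega}u(f_{E}x)dp\ge u(x)$, the second gives $u(x)\ge \min_{p\in C}\int_{\Omega}u(f_{E}x)dp$, and chaining yields the strict conclusion---whereas you wrap the same identities in a contradiction argument; but the underlying computation (your identity $U(f_{E}x^{*})-U(x_{E}x^{*})=\min_{p\in C^{*}(E)}\int_{\Omega}u(f_{E}x)dp-u(x)$) and the use of the Step~1 uniqueness criterion are identical.
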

		
		\begin{proof}[Proof of Lemma \ref{lem5}]
			Since if $\alpha[E, f]$ is unique, it implies that for $f\sim_{E}x$, one of the inequalities $f_{E}x^{*} \succsim x_{E}x^{*}$ and $x \succsim f_{E}x$ is strict. Notice that the first inequality implies
			\begin{equation*}
				\min\limits_{p\in C^{*}(E)} \int_{\Omega} u(f)dp \geq u(x)p^{*}(E)
			\end{equation*}
			and the second implies that 
			\begin{equation*}
				u(x) \geq \min\limits_{p\in C} \int_{\Omega} u(f_{E}x)dp
			\end{equation*}
			
			Add the term $u(x)(1-p^{*}(E))$ to both sides of the first inequality, then combine both inequalities and recall that one of them has to be strict yield:
			\begin{equation*}
				\min\limits_{p\in C^{*}(E)} \int_{\Omega} u(f)dp + u(x)(1-p^{*}(E)) >  \min\limits_{p\in C} \int_{\Omega} u(f_{E}x)dp
			\end{equation*}
			which is equivalent to $\min\limits_{p\in C^{*}(E)} \int_{\Omega} u(f_{E}x) dp > \min\limits_{p\in C} \int_{\Omega} u(f_{E}x) dp $. 
		\end{proof}
		Define
		\begin{equation*}
			\Delta f_{E}x \equiv  \min\limits_{p\in C^{*}(E)} \int_{\Omega} u(f_{E}x)dp - \min\limits_{p\in C} \int_{\Omega} u(f_{E}x)dp 
		\end{equation*}
		
		If $\alpha[E,g]$ is not unique, then it suffice to let $\alpha[E,g] = \alpha[E,f]$. If it is also unique, given Lemma \ref{lem5}, $g\sim_{E}x$ implies that 
		\begin{equation*}
			\Delta g_{E}x > 0
		\end{equation*}
		
		Therefore the difference
		\begin{equation*}
			L' - R' =  [\alpha[E, f] - \alpha[E, g]] [M_{1} - M_{2}] = [\alpha[E, f] - \alpha[E, g]]\cdot \Delta g_{E}x
		\end{equation*}
		is 0 if and only if $\alpha[E, f] = \alpha[E, g]$. 
		
		Notice that the construction in step 2 can be applied to any $g \in \mathcal{F}$. It then implies that $\alpha[E,f]$ needs to be a constant across all $f\in \mathcal{F}$. Therefore, equation (\ref{equ1}) now can be written as 
		\begin{equation}\label{equ3}
			\alpha[E] U(f_{E}x^{*}) + (1-\alpha[E]) U(f_{E}x) = \alpha[E] U(x_{E}x^{*}) + (1-\alpha[E]) U(x) 
		\end{equation}
		
		\underline{Step 4.} Equation (\ref{equ3}) implies that the DM's conditional evaluation of any $f\in \mathcal{F}$ can be represented by
		\begin{equation*}
			\min\limits_{p\in C_{\alpha[E]}(E)} \int_{E} u(f) \frac{dp}{p(E)} = u(x)
		\end{equation*}
		i.e. is given by RML updating with $\alpha[E]$. 
		\bigskip
		
		For any sufficiently good $x^{*}$, the LHS of equation (\ref{equ3}) can further be derived as: 
		\begin{align*}
			&\quad \alpha[E] U(f_{E}x^{*}) + (1-\alpha[E]) U(f_{E}x) \\
			& = \alpha[E] \min\limits_{q\in C^{*}(E)} \int_{\Omega} u(f_{E}x^{*})dq + (1-\alpha[E])  \min\limits_{p\in C} \int_{\Omega} u(f_{E}x) dp  \\
			& = \alpha[E]\left[ \min\limits_{q\in C^{*}(E)}  \int_{E} u(f) \frac{dq}{p^{*}(E)}\cdot p^{*}(E)  +(1-p^{*}(E)) u(x^{*})  \right] \\
			& \quad + (1-\alpha[E]) \min\limits_{p\in C} \int_{\Omega} u(f_{E}x) dp\\
			& = \alpha[E]\left[ \min\limits_{q\in C^{*}(E)}  \int_{E} u(f) \frac{dq}{p^{*}(E)}\cdot p^{*}(E)  + (1-p^{*}(E)) u(x)  \right]   \\ 
			&\quad+ \alpha[E] [u(x^{*}) - u(x)][1-p^{*}(E)] + (1-\alpha[E]) \min\limits_{p\in C} \int_{\Omega} u(f_{E}x) dp
		\end{align*}
		
		On the other hand, the RHS of equation (\ref{equ3}) can also be derived as
		\begin{align*}
			\alpha[E] U(x_{E}x^{*}) +  (1-\alpha[E]) U(x) =  \alpha[E] [u(x^{*}) - u(x)][1-p^{*}(E)] + u(x) 
		\end{align*}
		Observe that now equalizing the LHS and RHS of equation (\ref{equ3}) implies
		\begin{align*}
			u(x) & =   \alpha[E]\left[ \min\limits_{q\in C^{*}(E)}  \int_{E} u(f) \frac{dq}{p^{*}(E)}\cdot p^{*}(E)  +(1-p^{*}(E)) u(x)  \right] + (1-\alpha[E]) \min\limits_{p\in C} \int_{\Omega} u(f_{E}x) dp \\
			& =  \alpha[E]  \min\limits_{q\in C^{*}(E)}  u(f_{E}x)dq + (1-\alpha[E]) \min\limits_{p\in C} \int_{\Omega} u(f_{E}x) dp \\
			& = \min\limits_{q\in C^{*}(E)} \min\limits_{p\in C}  \int_{\Omega} u(f_{E}x) d((\alpha[E])q + (1-\alpha[E]) p )\\
			& = \min\limits_{p \in C_{\alpha[E]}(E)} \int_{\Omega} u(f_{E}x)dp
		\end{align*}
		where $C_{\alpha[E]}(E) = \alpha[E]C^{*}(E) + (1-\alpha[E])C $. 
		
		From the last equality one can further derive
		\begin{align*}
			0  & = \min\limits_{p \in C_{\alpha[E]}(E)} \int_{\Omega} u(f_{E}x)dp - u(x) \\
			& =  \min\limits_{p \in C_{\alpha[E]}(E)} \int_{\Omega} [u(f_{E}x) - u(x)]dp \\
			& = \min\limits_{p \in C_{\alpha[E]}(E)}  \int_{E} [u(f) - u(x)]dp \\
			& = \min\limits_{p \in C_{\alpha[E]}(E)} \left[ \int_{E} u(f)dp - u(x)p(E) \right]
		\end{align*}
		
		When $E$ is strict $\succsim$-nonnull, $p(E) > 0$ for all $p \in C_{\alpha[E]}(E)$, then the last equality further implies 
		\begin{equation*}
			\min\limits_{p\in C_{\alpha[E]}(E)} \int_{E} u(f) \frac{dp}{p(E)} = u(x)
		\end{equation*}
		which represents the conditional evaluation of $f$ under $\succsim_{E}$ since $f \sim_{E}x$. \\
		
		Therefore, the conditional preference $\succsim_{E}$ for each strict $\succsim$-nonnull event $E$ is given by RML updating with $\alpha[E]$. 
		
		\qed

		\subsection{Proof of Theorem \ref{thm4}}
		Given Theorem \ref{thm2}, the only remaining proof here is to show that $\alpha[E]$ is a constant across all events if and only if the EC axiom holds. 
		
		The necessity of the EC axiom is also immediate when one plugs a constant $\alpha$ into equation (\ref{equ3}). \\
		
		In the following I show that EC implies $\alpha[E]$ to be a constant across all strict $\succsim$-nonnull events.
		
		First of all, consider the case that there does not exist any strict $\succsim$-nonnull $E\in \Sigma$ such that $C \neq C^{*}(E)$. Then it implies that all $p \in C$ agree with the probability of all strict $\succsim$-nonnull event $E$. 
		
		Thus, if there exists only one strict $\succsim$-nonnull event $E$ such that $C \neq C^{*}(E)$, i.e. $\alpha[E]$ is unique, then it suffices to let this $\alpha[E]$ to be the constant $\alpha$ across all events.
		
		Next, when there exists at least two strict $\succsim$-nonnull events, $E_{1}$ and $E_{2}$, such that both $\alpha[E_{1}]$ and $\alpha[E_{2}]$ are unique. 
		
		Similar to the construction in the proof of Theorem \ref{thm2}, fix any $f\in \mathcal{F}$.  For any $\epsilon > 0$, let $f_{\epsilon}$ denote the act given by taking mixtures between $f$ and some constant act such that $u(f_{\epsilon}(\omega)) \in [0, \epsilon]$ for all $\omega \in \Omega$. 
		
		Take any $g\in \mathcal{F}$, let $g_{\lambda}y$ denote the act given by taking mixtures between $g$ and $y\in X$ with $\lambda \in (0,1]$. 
		
		Then for the two premises of EC, it suffices to find $\lambda$ and $y$ such that the following equations hold: 
		\begin{equation}\label{ec1}
			U(f_{\epsilon E_{1}}x) = U([g_{\lambda}y]_{E_{2}}x)
		\end{equation}
		for $x \sim_{E_{1}} f_{\epsilon}$ and 
		\begin{equation}\label{ec2}
			U(f_{\epsilon E_{1}}x_{1}^{*}) = U([g_{\lambda}y]_{E_{2}}x_{2}^{*})
		\end{equation}
		for some sufficiently large $x_{1}^{*}$ and $x_{2}^{*}$ with $x_{E_{1}}x_{1}^{*} \sim x_{E_{2}}x_{2}^{*}$. 
		
		As $x_{1}^{*}$ and $x_{2}^{*}$  can be chosen after $f_{\epsilon}$ and $g_{\lambda}y$ are pinned down, thus both $U(f_{\epsilon E_{1}}x_{1}^{*}) $ and $U([g_{\lambda}y]_{E_{2}}x_{2}^{*})$ can be guaranteed to be evaluated at an extreme point of $C^{*}(E_{1})$ and an extreme point of $C^{*}(E_{2})$ respectively. 
		
		Then from equation (\ref{ec2}) one can further derive
		\begin{align*}
			\min\limits_{p\in C} \int_{\Omega} u(f_{\epsilon E_{1}}x^{*}_{1}) dp& = \min\limits_{p\in C} \int_{\Omega} u([g_{\lambda}y]_{E}x_{2}^{*}) dp\\
			\Rightarrow \min\limits_{p\in C^{*}(E_{1})}  \int_{E_{1}} u(f_{\epsilon}) dp + u(x^{*}_{1}) (1- p^{*}(E_{1})) &= \min\limits_{p\in C^{*}(E_{2})}  \int_{E_{2}} u(g_{\lambda}y) dp + u(x_{2}^{*}) (1- p^{*}(E_{2})) 
		\end{align*}
		
		Let $q_{f}$ and $q_{g}$ be the two extreme points in the set of posteriors of $C^{*}(E_{1})$ and $C^{*}(E_{2})$ that evaluate $f_{\epsilon}$ and $g_{\lambda}y$ respectively, then the last equality can be written as: 
		\begin{equation*}
			u(f_{\epsilon})\cdot q_{f} + u(x^{*}_{1}) (1- p^{*}(E_{1})) = u(g_{\lambda}y) \cdot q_{g} + u(x_{2}^{*}) (1- p^{*}(E_{2})) 
		\end{equation*}
		Furthermore as the condition $x_{E_{1}}x_{1}^{*} \sim x_{E_{2}}x_{2}^{*}$ implies that
		\begin{equation*}
			u(x) p^{*}(E_{1}) + u(x_{1}^{*})(1-p^{*}(E_{1}))  = u(x) p^{*}(E_{2}) + u(x_{2}^{*}) (1-p^{*}(E_{2})) 
		\end{equation*}
		i.e. 
		\begin{equation*}
			u(x_{1}^{*})(1-p^{*}(E_{1}))   -  u(x_{2}^{*}) (1-p^{*}(E_{2}))  = u(x)[p^{*}(E_{2}) - p^{*}(E_{1})]
		\end{equation*}
		As $E_{1}$ and $E_{2}$ are chosen arbitrarily, without loss of generality, let $p^{*}(E_{2}) - p^{*}(E_{1}) \geq 0$ and also notice that $u(x) \geq 0$. In the following, let $M \equiv u(x)[p^{*}(E_{2}) - p^{*}(E_{1})]$. Notice that $M \in [0, \epsilon]$. 
		
		Then equation (\ref{ec1}) is further equivalent to 
		\begin{equation}
			u(f_{\epsilon})\cdot q_{f} + M = \lambda u(g)\cdot q_{g} + (1-\lambda)u(y) 
		\end{equation}
		Therefore, for each $\lambda \in (0.1]$, equation (\ref{ec2}) holds if $(1-\lambda)u(y)$ is given by the following
		\begin{equation}\label{equ5}
			(1-\lambda)u(y) = u(f_{\epsilon})\cdot q_{f} + M - \lambda u(g)\cdot q_{g} 
		\end{equation}
		and $y$ is arbitrary if $\lambda = 1$. 
		
		Next consider the equation (\ref{ec1})
		\begin{equation*}
			\min\limits_{p \in C} \int_{\Omega} u(f_{\epsilon E_{1}}x)dp = \min\limits_{p\in C} \left\{ \int_{E_{2}}\lambda u(g)dp + (1-\lambda)u(y)p(E_{2}) + u(x)(1-p(E_{2}))    \right\}
		\end{equation*}
		Plugging into $(1-\lambda)u(y)$ from equation (\ref{equ5}) to the RHS and denote it by $R(\lambda)$: 
		\begin{align*}
			R(\lambda) = &\min\limits_{p\in C} \left\{ \int_{E_{2}}\lambda u(g)dp +u(f_{\epsilon})\cdot q_{f}p(E_{2}) + Mp(E_{2}) - \lambda u(g)\cdot q_{g} p(E_{2}) + u(x)(1-p(E_{2}))    \right\}\\
			& = \min\limits_{p\in C} \left\{ u(f_{\epsilon})\cdot q_{f}p(E_{2}) + u(x)(1-p(E_{2}))  + M  p(E_{2})+ \lambda \left[\int_{E_{2}} u(g)dp -u(g)\cdot q_{g} p(E_{2}) \right] \right\}
		\end{align*}
		Again, $R(\lambda)$ is a continuous function of $\lambda$ and the LHS of equation (\ref{ec1}) is a constant of $\lambda$. Thus it suffices to show $R(0) > $ LHS and $R(1) <$ LHS. 
		
		For $R(0)$ one has, 
		\begin{align*}
			R(0) & = \min\limits_{p\in C} \left\{ u(f_{\epsilon})\cdot q_{f}p(E_{2}) + u(x)(1-p(E_{2}))  + M  p(E_{2}) \right\} \\
			& \geq \min\limits_{p\in C} \left\{ u(f_{\epsilon})\cdot q_{f}p(E_{2}) + u(x)(1-p(E_{2})) \right\} \\ 
			& \geq  \min\limits_{p\in C} \left\{ u(x) p(E_{2}) + u(x)(1-p(E_{2})) \right\}  \\
			& = u(x) \\
			& > \min\limits_{p \in C} \int_{\Omega} u(f_{\epsilon E_{1}}x)dp  = \text{LHS}
		\end{align*}
		
		On the other hand for $R(1)$, 
		\begin{align*}
			R(1) & =  \min\limits_{p\in C} \left\{ u(f_{\epsilon})\cdot q_{f}p(E_{2}) + u(x)(1-p(E_{2}))  + M  p(E_{2})+ \left[\int_{E_{2}} u(g)dp -u(g)\cdot q_{g} p(E_{2}) \right] \right\}\\
			& \leq 2 \epsilon + \min\limits_{p\in C} \left\{\int_{E_{2}} u(g)dp -u(g)\cdot q_{g} p(E_{2})\right\}
		\end{align*}
		For the second term as $\alpha[E_{2}]$ is unique, its minimum is negative. Therefore, it suffices to find $\epsilon > 0$ such that 
		\begin{equation*}
			\epsilon < -\frac{1}{2} \min\limits_{p\in C} \left\{\int_{E_{2}} u(g)dp -u(g)\cdot q_{g} p(E_{2})\right\} 
		\end{equation*}
		Then given this $\epsilon$ one has 
		\begin{equation*}
			R(1) < 2 \epsilon - 2 \epsilon = 0  \leq \text{LHS}
		\end{equation*}
		Therefore, the existence of $\lambda \in (0,1)$ that solves equation  (\ref{equ5}) is guaranteed. 
		
		Finally, once $\lambda$ is solved, $u(y)$ is given by equation (\ref{equ5}): 
		\begin{equation*}
			u(y) = \frac{u(f_{\epsilon})\cdot q_{f} + M - \lambda u(g)\cdot q_{g}}{1-\lambda} \geq - \frac{\lambda}{1-\lambda} u(g)\cdot q_{g}
		\end{equation*}
		where the last inequality comes from $u(f_{\epsilon}(\omega)) + M \geq 0$. It remains to show that $u(y) \geq \underline{u}$ when $\underline{u}$ exists to guarantee the existence of this construction. It suffices to transform $g$ by taking mixtures with constant acts such that $u(g)\cdot q_{g} \leq 0$.  \\
		
		From this point on, apply exactly the same argument in step 3 of the proof of Theorem \ref{thm2} would imply that $\alpha[E_{1}] = \alpha[E_{2}]$. Therefore, $\alpha[E]$ needs to be a constant across all strict $\succsim$-nonnull $E\in \Sigma$. 
		\qed
		
		\newpage
		\section{On Assumption \ref{asum}}\label{apx1}
		In the main text, Assumption \ref{asum} is made for the ease of exposition. This appendix provides characterizations of ML without this assumption. The characterization results for Contingent RML and RML can be extended similarly, thus omitted in the present paper. 
		
		\subsection{ML with Bounded Ex-ante Preference}
		First consider the case where $u(X)$ is bounded. Without loss of generality, one can now normalize $u(X)$ to $[0,M]$ for some $ M \in \R_{++}$. Let $\bar{x}$ denote the best consequence. It should now be clear from the proof of Theorem \ref{thm1} that the key for characterization is to find $x^{*}$ such that $f_{E}x^{*}$ is ex-ante evaluated at some extreme point in $C^{*}(E)$. When $u(X)$ is bounded, the needed $x^{*}$ may not be available, then one can equivalently consider ``shrinking'' the act $f$. 
		
		For any act $f\in \mathcal{F}$ and $K \in  [1, \infty)$, let $f/K$ denote an act such that $u(f/K(\omega)) = u(f(\omega))/K$. Let $K_{E,f}$ be the threshold such that for all $K \geq K_{E,f}$, $(f/K)_{E}\bar{x}$ is evaluated at an extreme point in $C^{*}(E)$. Then the following axiom can be easily shown to be equivalent to ML in the current setting. \\
		
		\textbf{Axiom CR-S'} (\textbf{C}ontingent \textbf{R}easoning given \textbf{S}ufficiently large shrinking)
		
		For all $f\in \mathcal{F}$ and $x \in X$, if $f \sim_{E}x$, then $(f/K_{E,f})_{E}\bar{x} \sim (x/K_{E,f})_{E}\bar{x}$. \\
		
		On the other hand, instead of shrinking the act $f$, one can also consider taking mixtures between $f_{E}\bar{x}$ and $\underline{x}_{E}\bar{x}$. Similarly, one can find sufficiently small $\lambda_{E,f} \in (0,1)$ such that $\lambda_{E,f} f_{E}\bar{x} + (1-\lambda_{E,f}) \underline{x}_{E}\bar{x}$ is evaluated at an extreme point in $C^{*}(E)$.\footnote{I thank Rui Tang for suggesting this.} Then one can equivalently consider the following axiom: \\
		
		\textbf{Axiom CR-S''} (\textbf{C}ontingent \textbf{R}easoning given \textbf{S}ufficiently small mixing)
		
		For all $f\in \mathcal{F}$ and $x \in X$, if $f \sim_{E}x$, then $\lambda_{E,f} f_{E}\bar{x} + (1-\lambda_{E,f}) \underline{x}_{E}\bar{x} \sim \lambda_{E,f} x_{E}\bar{x} + (1-\lambda_{E,f}) \underline{x}_{E}\bar{x}$. \\
		
		Notice that, the existence of $K_{E,f}$ and $\lambda_{E,f}$ still depends on the finitely many extreme points assumption. The next subsection shows a way to relax it. 
		
		\subsection{ML with Infinitely Many Extreme Points}
		Suppose the set of priors $C$ could have infinitely many extreme points. Then a finite threshold for sufficiently good consequences may not exist. Instead one can consider a sequences of thresholds. Consider the following axiom:\\
		
		\textbf{Axiom Approximate CR-S} (Approximate \textbf{C}ontingent \textbf{R}easoning given \textbf{S}ufficiently good consequence).
		
		For all $f\in \mathcal{F}$ and for all $x, z, w \in X$ with $z \succ w$, there exists $\bar{x}_{E,f,z,w}$ such that for all $x^{*} \in X$ with $x^{*} \succsim \bar{x}_{E,f,z,w}$, if $f \sim_{E}x$,  then 
		\begin{equation*}
			\frac{1}{2} f_{E}x^{*} + \frac{1}{2} w \prec \frac{1}{2} x_{E}x^{*} + \frac{1}{2} z
		\end{equation*}
		and 
		\begin{equation*}
			\frac{1}{2} f_{E}x^{*} + \frac{1}{2}z \succ \frac{1}{2} x_{E}x^{*} + \frac{1}{2}w
		\end{equation*}
		
		Notice that CR-S implies Approximate CR-S. On the other hand, in the case where CR-S is silent as $\bar{x}_{E,f}$ does not exist, Approximate CR-S imposes an additional restriction on the behaviors. It restricts that the difference between $f_{E}x^{*}$ and $x_{E}x^{*}$ should be arbitrarily small when $x^{*}$ is sufficiently good. 
		
		Hence, Approximate CR-S conveys essentially the same intuition as CR-S. The following representation theorem shows that Approximate CR-S is equivalent to ML in the current setting. 
		
		\begin{theorem}\label{thm5}
			$\{\succsim_{E}\}_{E\in \Sigma}$ is represented by ML if and only if Approximate CR-S holds for all strict $\succsim$-nonnull events $E$. 
		\end{theorem}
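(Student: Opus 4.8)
The plan is to reduce both inequalities in Approximate CR-S to a single scalar quantity and then use weak* compactness of $C$ in place of the dropped finitely-many-extreme-points assumption. Fix a strict $\succsim$-nonnull $E$ and an act $f$, and let $x$ denote its conditional certainty equivalent, so $f \sim_{E} x$. Writing $U(g) = \min_{p \in C}\int_{\Omega} u(g)\,dp$ and using that $u$ is affine on Anscombe--Aumann mixtures together with constant-additivity of the $\min$, I would first record that for any $x^{*}\succsim x$ and constant acts $z,w$ one has $U(\tfrac{1}{2}f_{E}x^{*}+\tfrac{1}{2}w)=\tfrac{1}{2}U(f_{E}x^{*})+\tfrac{1}{2}u(w)$ and $U(\tfrac{1}{2}x_{E}x^{*}+\tfrac{1}{2}z)=\tfrac{1}{2}U(x_{E}x^{*})+\tfrac{1}{2}u(z)$, where $U(x_{E}x^{*})=p^{*}(E)u(x)+(1-p^{*}(E))u(x^{*})$ since $x^{*}\succsim x$ forces evaluation at $C^{*}(E)$. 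Defining the scalar gap $D(x^{*})\equiv U(x_{E}x^{*})-U(f_{E}x^{*})$, the first inequality of the axiom becomes $D(x^{*})>-(u(z)-u(w))$ and the second becomes $D(x^{*})<u(z)-u(w)$, so jointly they assert exactly $|D(x^{*})|<u(z)-u(w)$.

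The key step, and the one I expect to be the main obstacle, is the limit
\begin{equation*}
	\lim_{u(x^{*})\to\infty} D(x^{*}) = p^{*}(E)u(x) - \min_{p \in C^{*}(E)}\int_{E} u(f)\,dp .
\end{equation*}
To prove it I would write $D(x^{*})=\max_{p\in C}\bigl[h(p)-d(p)\,u(x^{*})\bigr]$ with $h(p)\equiv p^{*}(E)u(x)-\int_{E}u(f)\,dp$ and $d(p)\equiv p^{*}(E)-p(E)\ge 0$, both weak* continuous on the weak*-compact set $C$, and $\{d=0\}=C^{*}(E)$. The lower bound is immediate by restricting the maximum to $C^{*}(E)$; the upper bound is the delicate part, since with infinitely many extreme points no finite $x^{*}$ need evaluate $f_{E}x^{*}$ exactly at $C^{*}(E)$. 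Here a compactness argument does the work: any sequence of near-maximizers $p_{n}$ at $u(x^{*}_{n})\to\infty$ must have $d(p_{n})\to 0$, hence a weak* limit point in $C^{*}(E)$, forcing $\max_{C}[h-d\,u(x^{*})]\to\max_{C^{*}(E)}h$. This is precisely where weak* compactness substitutes for a finite threshold $\bar{x}_{E,f}$.

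For necessity, under ML one has $p^{*}(E)u(x)=\min_{p\in C^{*}(E)}\int_{E}u(f)\,dp$, so the displayed limit is $0$; exhibiting the witness $p^{\dagger}\in C^{*}(E)$ attaining the minimum moreover shows $D(x^{*})\ge 0$ for every $x^{*}\succsim x$. Thus $D(x^{*})>-(u(z)-u(w))$ holds for all such $x^{*}$ (as $z\succ w$), while $D(x^{*})\to 0$ gives $D(x^{*})<u(z)-u(w)$ beyond a threshold $\bar{x}_{E,f,z,w}$, so both inequalities of Approximate CR-S follow.

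For sufficiency, the axiom delivers $|D(x^{*})|<u(z)-u(w)$ for all $x^{*}\succsim\bar{x}_{E,f,z,w}$ and every pair $z\succ w$. Since $u(X)$ is an interval, $u(z)-u(w)$ can be made arbitrarily small, whence $\lim_{u(x^{*})\to\infty}D(x^{*})=0$. Comparing with the limit formula yields $p^{*}(E)u(x)=\min_{p\in C^{*}(E)}\int_{E}u(f)\,dp$, that is, $\min_{p\in C^{*}(E)}\int_{E}u(f)\,\tfrac{dp}{p(E)}=u(x)$. As $x\sim_{E}f$ and $f$ was arbitrary, $\succsim_{E}$ is represented by ML updating, which completes the argument.
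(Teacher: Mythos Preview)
Your proof is correct and takes a genuinely different route from the paper's, particularly in the sufficiency direction.

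The paper proceeds as follows. It first proves a lemma (by the envelope theorem and monotone convergence) that $\min_{p\in C^{*}(E)}\int_{\Omega}u(f_{E}x^{*})\,dp-\min_{p\in C}\int_{\Omega}u(f_{E}x^{*})\,dp\to 0$ as $u(x^{*})\to\infty$. Necessity is then immediate from this lemma plus the ML identity. For sufficiency the paper argues by contrapositive: if $C_{E}$ differs from the Bayesian update of $C^{*}(E)$, then either some $\tilde p/\tilde p(E)\notin C_{E}$ or some $q\in C_{E}$ lies outside that update, and in each case a separating-hyperplane act $f$ is produced that violates one of the two inequalities in the axiom for all large $x^{*}$.

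Your argument instead packages everything through the single scalar $D(x^{*})=U(x_{E}x^{*})-U(f_{E}x^{*})$, shows the axiom is equivalent to $|D(x^{*})|<u(z)-u(w)$ beyond a threshold, and establishes the limit $\lim_{u(x^{*})\to\infty}D(x^{*})=p^{*}(E)u(x)-\min_{p\in C^{*}(E)}\int_{E}u(f)\,dp$ via a weak*-compactness subsequence argument rather than the envelope theorem. Sufficiency is then direct: the axiom forces the limit to be $0$ for every $f$, which pins the conditional certainty equivalent to the ML value, hence the two MEU functionals coincide and $C_{E}$ equals the ML posterior set. This bypasses both the contrapositive structure and the two-case separating-hyperplane analysis in the paper. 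What your approach buys is a cleaner, symmetric treatment of both inequalities and a sufficiency argument that avoids hyperplane separation entirely; what the paper's approach buys is an explicit witness act for each failure mode, which may be more informative if one cares about identifying where the axiom breaks.
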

		
		\begin{proof}[Proof of Theorem \ref{thm5}] 
			First consider the following lemma: 
			\begin{lemma}\label{alem}
				For any strict $\succsim$-nonnull $E\in \Sigma$ and $f\in \mathcal{F}$, for any $\epsilon > 0$ there exists $\bar{x}_{E,f,\epsilon} \in X$ such that
				\begin{equation*}
					\min\limits_{p\in C^{*}(E)} \int_{\Omega} u(f_{E}x^{*})dp -  \min\limits_{p\in C}\int_{\Omega} u(f_{E}x^{*}) dp  < \epsilon
				\end{equation*}
				for all $x^{*} \succsim \bar{x}_{E,f,\epsilon}$. 
			\end{lemma}
			
			\begin{proof}[Proof of Lemma \ref{alem}]
				For any strict $\succsim$-nonnull $E\in \Sigma$ and $f\in \mathcal{F}$, either there exists $\bar{x}_{E,f}$ such that 
				\begin{equation*}
					\min\limits_{p\in C} \int_{\Omega} u(f_{E}x^{*}) dp = \min\limits_{p\in C^{*}(E)} \int_{\Omega} u(f_{E}x^{*}) dp
				\end{equation*}
				for all $x^{*} \succsim \bar{x}_{E,f}$ or not. If it is the first case, then this lemma is trivially true. 
				
				Consider the case there does not exist $\bar{x}_{E,f}$ for some $E$ and $f$. For each $x \in X$, let $p_{x}$ be the probability measure in $C$ that evaluates the act $f_{E}x$ according to MEU, i.e. $p_{x} \equiv \arg\min_{p\in C} \int_{\Omega} u(f_{E}x)dp $. Let $q$ denote the probability measure in $C^{*}(E)$ that evaluates the act $f_{E}x$ and notice that it does not depend on the value of $x$. 
				
				Then one has 
				\begin{align*}
					&\quad \min\limits_{p\in C^{*}(E)} \int_{\Omega} u(f_{E}x)dp -  \min\limits_{p\in C}\int_{\Omega} u(f_{E}x) dp \\
					& = \int_{E} u(f)dq + u(x)(1-p^{*}(E)) - \int_{E} u(f)dp_{x} - u(x)(1-p_{x}(E))
				\end{align*}
				
				Take derivative with respect to $u(x)$ and apply envelope theorem yields
				\begin{equation*}
					\frac{d}{du(x)} \left[ \min\limits_{p\in C^{*}(E)} \int_{\Omega} u(f_{E}x)dp -  \min\limits_{p\in C}\int_{\Omega} u(f_{E}x) dp \right] = p_{x}(E) - p^{*}(E)
				\end{equation*}
				
				The current assumption $p_{x} \notin C^{*}(E)$ implies that $p_{x}(E) - p^{*}(E) < 0$, i.e. the difference is decreasing with respect to $u(x)$. Furthermore, since the difference is bounded below by zero, monotone convergence theorem implies that
				\begin{equation*}
					\min\limits_{p\in C^{*}(E)} \int_{\Omega} u(f_{E}x)dp -  \min\limits_{p\in C}\int_{\Omega} u(f_{E}x) dp \rightarrow 0 
				\end{equation*}
				for $u(x) \rightarrow \infty$, i.e. the lemma holds. 
				
			\end{proof}
			
			For the necessity of Approximate CR-S, recall under ML updating, $f\sim_{E}x$ implies
			\begin{equation*}
				\min\limits_{p\in C^{*}(E)} \int_{\Omega} u(f_{E}x^{*})dp = \min\limits_{p\in C} \int_{\Omega} u(x_{E}x^{*})dp 
			\end{equation*}
			Thus for any $\epsilon > 0$ there exists $\bar{x}_{E,f,\epsilon}$ such that 
			\begin{equation*}
				\min\limits_{p\in C} \int_{\Omega} u(x_{E}x^{*})dp - \min\limits_{p\in C}\int_{\Omega} u(f_{E}x^{*}) dp  < \epsilon
			\end{equation*}
			for all $x^{*} \succsim \bar{x}_{E,f,\epsilon}$. Then for each $z \succ w$, it suffices to let $\epsilon = u(z) - u(w)$ and then Approximate CR-S axiom holds. \\
			
			For sufficiency, fix any strict $\succsim$-nonnull $E$ and consider the contrapositive statement: not ML updating implies not Approximate CR-S axiom. Not ML means that $C_{E} \neq \{ p/p(E): p \in C^{*}(E) \}$. In other words, either there exists $\tilde{p} \in C^{*}(E)$ such that $\tilde{p}/\tilde{p}(E) \notin C_{E}$ or there exists $q \in C_{E}$ such that $q \notin \{p/p(E): p \in C^{*}(E)\}$ or both. 
			
			Consider the first case, by the same argument implied by the strict separating hyperplane Theorem in the proof of Theorem \ref{thm1}, there exists $f\in \mathcal{F}$ and $x\in X$ such that $f\sim_{E}x$ and 
			\begin{equation*}
				\min\limits_{p\in C^{*}(E)} \int_{E}u(f)\frac{dp}{p(E)} < u(x)
			\end{equation*}
			Then for all $x^{*} \succsim x$, 
			\begin{align*}
				\min\limits_{p\in C} \int_{\Omega} u(x_{E}x^{*})dp -  \min\limits_{p\in C^{*}(E)} \int_{\Omega} u(f_{E}x^{*})dp = u(x)p^{*}(E) - \min\limits_{p\in C^{*}(E)} \int_{E}u(f)dp> 0
			\end{align*}
			That is, there exists $\delta_{E,f} > 0 $ such that 
			\begin{align*}
				\min\limits_{p\in C} \int_{\Omega} u(x_{E}x^{*})dp -  \min\limits_{p\in C^{*}(E)} \int_{\Omega} u(f_{E}x^{*})dp > \delta_{E,f} > 0
			\end{align*}
			Now for all $x^{*} \succsim x$, 
			\begin{align*}
				&\quad \min\limits_{p\in C} \int_{\Omega} u(x_{E}x^{*})dp - \min\limits_{p\in C}\int_{\Omega} u(f_{E}x^{*})dp \\
				&=  \min\limits_{p\in C} \int_{\Omega} u(x_{E}x^{*})dp - \min\limits_{p\in C^{*}(E)} \int_{\Omega} u(f_{E}x^{*})dp  + \min\limits_{p\in C^{*}(E)} \int_{\Omega} u(f_{E}x^{*})dp  - \min\limits_{p\in C}\int_{\Omega} u(f_{E}x^{*})dp \\
				& > \delta_{E,f}  > 0
			\end{align*}
			the first inequality comes from the fact that $\min\limits_{p\in C^{*}(E)} \int_{\Omega} u(f_{E}x^{*})dp  - \min\limits_{p\in C}\int_{\Omega} u(f_{E}x^{*})dp \geq 0$. Then it suffices to find $z \succ w$ such that $u(z) - u(w) < \delta_{E,f}$ and it will imply that for all $x^{*} \succsim x$: 
			\begin{equation*}
				\frac{1}{2}f_{E}x^{*} + \frac{1}{2}z \prec \frac{1}{2} x_{E}x^{*} + \frac{1}{2} w
			\end{equation*}
			i.e. the Approximate CR-S axiom fails. \\
			
			Now consider the second case, there exists $q \in C_{E}$ such that $q \notin \{p/p(E): p \in C^{*}(E)\}$. By strict separating hyperplane theorem, there exists $f \sim_{E}x$ such that 
			\begin{equation*}
				u(x) = \min\limits_{p\in C_{E}} \int_{\Omega} u(f) dp \leq \int_{\Omega} u(f) dq < \min\limits_{p\in C^{*}(E)} \int_{E}u(f)\frac{dp}{p(E)}
			\end{equation*}
			Then it further implies for all $x^{*} \succsim x$, 
			\begin{equation*}
				\min\limits_{p\in C^{*}(E)} \int_{\Omega} u(f_{E}x^{*})dp  - \min\limits_{p\in C} \int_{\Omega} u(x_{E}x^{*})dp  =  \min\limits_{p\in C^{*}(E)} \int_{E}u(f)dp - u(x)p^{*}(E) > 0
			\end{equation*}
			i.e. there exists $\delta_{E,f}$ such that 
			\begin{equation*}
				\min\limits_{p\in C^{*}(E)} \int_{\Omega} u(f_{E}x^{*})dp  - \min\limits_{p\in C} \int_{\Omega} u(x_{E}x^{*})dp  > \delta_{E,f} > 0
			\end{equation*}
			On the other hand, by Lemma \ref{alem}, for any $\epsilon >0$ there exists $\bar{x}_{E,f,\epsilon}$ such that 
			\begin{equation*}
				\min\limits_{p\in C^{*}(E)} \int_{\Omega} u(f_{E}x^{*})dp - \min\limits_{p\in C}\int_{\Omega} u(f_{E}x^{*}) dp < \epsilon
			\end{equation*}
			for all $x^{*} \succsim \bar{x}_{E,f,\epsilon}$. Now for all $x^{*} \succsim \max\{x, \bar{x}_{E,f,\epsilon}\}$, 
			\begin{align*}
				&\quad  \min\limits_{p\in C}\int_{\Omega} u(f_{E}x^{*})dp - \min\limits_{p\in C} \int_{\Omega} u(x_{E}x^{*})dp \\
				&=  \min\limits_{p\in C}\int_{\Omega} u(f_{E}x^{*})dp - \min\limits_{p\in C^{*}(E)} \int_{\Omega} u(f_{E}x^{*})dp  + \min\limits_{p\in C^{*}(E)} \int_{\Omega} u(f_{E}x^{*})dp  - \min\limits_{p\in C} \int_{\Omega} u(x_{E}x^{*})dp\\
				& > -\epsilon + \delta_{E,f} 
			\end{align*}
			Now find any $z \succ w$ such that $u(z) - u(w) = \eta < \delta_{E,f}$ and let $\epsilon = \delta_{E,f} - \eta > 0$. Then for all $x^{*} \succsim \max\{x, \bar{x}_{E,f,\epsilon}\}$ the previous result implies that 
			\begin{equation*}
				\min\limits_{p\in C}\int_{\Omega} u(f_{E}x^{*})dp - \min\limits_{p\in C} \int_{\Omega} u(x_{E}x^{*})dp > -\epsilon + \delta_{E,f} = \eta > 0
			\end{equation*}
			Then 
			\begin{align*}
				\frac{1}{2}  \min\limits_{p\in C}\int_{\Omega} u(f_{E}x^{*})dp + \frac{1}{2}u(w) - \min\limits_{p\in C} \int_{\Omega} u(x_{E}x^{*})dp - \frac{1}{2}u(z) > \frac{1}{2}[\eta - \eta] = 0 
			\end{align*}
			i.e. $\frac{1}{2} f_{E}x^{*} + \frac{1}{2} w \succ \frac{1}{2}x_{E}x^{*} + \frac{1}{2}z$ for all $x^{*} \succsim \bar{x}_{E,f,\epsilon}$. Thus the Approximate CR-S axiom fails in this case as well, i.e. not ML implies not Approximate CR-S. 
		\end{proof}
		
		Finally, in the case where $u(X)$ is bounded and $C$ has infinitely many extreme points. It suffices to consider axiom Approximate CR-S' or Approximate CR-S'' that can be analogously defined. 
		
	\end{appendices}

	\newpage
	\bibliographystyle{econ}
	\bibliography{/Users/xiaoyucheng/Dropbox/Research/References/mylibrary.bib}

\end{document}